\newcommand*{\QED}{\hfill\ensuremath{\square}}
\newtheorem{theorem}{Theorem}
\newtheorem{lemma}{Lemma}
\newtheorem{defn}{Definition}
\begin{document}
\title{Popular Topics Spread Faster: New Dimension for Influence Propagation in Online Social Networks}
 \author{\IEEEauthorblockN{Tianyi~Pan, Alan~Kuhnle~\IEEEmembership{Student~Member,~IEEE}, Xiang~Li~\IEEEmembership{Student~Member,~IEEE} and My~T.~Thai~\IEEEmembership{Senior~Member,~IEEE}}\\
\IEEEauthorblockA{CISE Department,
University of Florida, Gainesville, FL, USA \\ Email: \{tianyi,kuhnle,xixiang,mythai\}@cise.ufl.edu}
 }
\author{Tianyi~Pan,
        Alan~Kuhnle,~\IEEEmembership{Student~Member,~IEEE},
        Xiang~Li,~\IEEEmembership{Student~Member,~IEEE}
        and~My~T.~Thai,~\IEEEmembership{Senior~Member,~IEEE}%
\thanks{T. Pan, A. Kuhnle, X. Li and M. T. Thai are with the Department of Computer \& Information Science \& Engineering, University of Florida, Gainesville,
FL, 32611 USA. Email: \{tianyi, kuhnle,xixiang, mythai\}@cise.ufl.edu }.}%


\maketitle

\begin{abstract}
Information can propagate among Online Social Network (OSN) users at a high speed, which makes the OSNs become important platforms for viral marketing. Although the viral marketing related problems in OSNs have been extensively studied in the past decade, the existing works all assume known propagation rates and are not able to solve the scenario when the rates may dynamically increase for popular topics. In this paper, we propose a novel model, Dynamic Influence Propagation (DIP), which allows propagation rates to change during the diffusion and can be used for describing information propagation in OSNs more realistically. 
Based on DIP, we define a new research problem: Threshold Activation Problem under DIP (TAP-DIP). TAP-DIP is more generalized than TAP and can be used for studying the DIP model. However, it adds another layer of complexity over the already \#P-hard TAP problem. Despite it hardness, we are able to approximate TAP-DIP with $O(\log|V|)$ ratio. Our solution consists of two major parts: 1) the Lipschitz optimization technique and 2) a novel solution to the general version of TAP, the Multi-TAP problem. We experimentally test our solution Using various real OSN datasets, and demonstrate that our solution not only generates high-quality yet much smaller seed sets when being aware of the rate increase, but also is scalable. In addition, considering DIP or not has a significant difference in seed set selection. 
\end{abstract}
\begin{IEEEkeywords}
Dynamic Influence Propagation, Online Social Network, Threshold Activation Problem
\end{IEEEkeywords}

\section{Introduction}\label{sc:introduction}
\IEEEPARstart{O}{SNs} have become effective channels for influence propagation as users of OSNs tend to reply/forward the content they are interested in, which makes the content visible to all other users in their social circles. One use case of influence propagation is the viral marketing campaigns \cite{Domingos01,Richardson02}, where companies provide free samples of products to some influential individuals (seed nodes), in order to spread the product information to at least a certain number of users via word-of-mouth effect. Since the seminal paper by Kempe et al. \cite{kempe2003maximizing}, influence propagation in OSNs has been studied in various contexts \cite{Leskovec07,Chen10,Goyal11b,du2013scalable,Cohen14,Borgs14,Tang14,Tang15,nguyen2016targeted,li2017approximate,Nguyen2016stop,long2011minimizing, goyal2013minimizing,nguyen2013least,zhang2014minimizing,dinh2014cost,kuhnle2017scalable}. 

In previous works, there are two major types of influence propagation models: 1) the Triggering Model \cite{kempe2003maximizing} in which the influence propagates in rounds and thus the propagation rate is uniform; 2) the Continuous-Time Diffusion Model \cite{du2013scalable}, in which the propagation rate is decided by a probability density function (pdf) for each edge. The two existing models share a common feature that they are \textit{static}: whether the propagation rate is constant or follows a pdf, it is known before the propagation starts. 

Unfortunately, the models are not able to characterize one important property: the propagation rate may change \textit{during} the propagation. As we confirmed by analyzing the retweet data we crawled using Twitter API, a topic is likely to propagate at a faster rate once it becomes popular (trending). Thus, to better depict influence propagation in reality, it is necessary to develop a propagation model that enables change of propagation rates \textit{based on the current propagation status}. 

Therefore, we propose the DIP model which can explicitly consider the rate change. In the model, we follow the idea in literature \cite{zhang2014minimizing} that a topic becomes popular when number of influenced nodes surpasses a predefined amount. Notice that the condition may not always reflect the complicated conditions in reality (e.g. Twitter has internal algorithms for determining trending topics), but it is a reasonable abstraction. We then formulate the problem TAP-DIP To analytically study the model. TAP-DIP asks for a seed set with minimum size that guarantees that the number of nodes influenced can reach a certain threshold within time limit, when the propagation model is defined under DIP. 

The main challenge of TAP-DIP is resulted from its one new dimension, dynamic propagation rates, as it creates an obstacle for using the sampling techniques that are widely applied to solve influence propagation related problems. The sample techniques are important for influence propagation since even computing the exact influence is \#P-hard \cite{Chen10}. Each sample provides information on what nodes can be influenced by a certain node (forward sampling \cite{kempe2003maximizing}) or the set of nodes that may influence a target node (reverse sampling \cite{Borgs14}). Obviously, the samples have no access to global information such as total number of increased nodes and propagation rate change. Thus, the sampling methods cannot be easily adapted for solving TAP-DIP.

To tackle the challenges brought by dynamic propagation rates, we propose the algorithm FAST (stands for Finding Anticipated Speedup Time) which can decide the near-optimal time that the propagation rate may increase, in terms of minimizing the number of seeds used to trigger the increase together with those to guarantee reaching thresholds. FAST breaks down TAP-DIP into subproblems that the rate increases happen at fixed times, in which case the sampling methods are again applicable. However, the subproblem is still complicated as it needs to meet the thresholds for both triggering the rate increase and satisfying the activation requirement. To solve the subproblem, we designed the first efficient algorithms for both Multi-TAP (MTAP) and Multi-Influence Maximization (MIM). FAST can solve the TAP-DIP problem with approximation ratio $2\log|V|$ ($V$ is the set of nodes in the OSN), which is close to the best ratio $\log|V|$ that one can expect for TAP \textit{without} DIP. We run extensive experiments to demonstrate the efficiency of FAST and explore various settings of the DIP model using several real-world OSN datasets.

In summary, our contributions are as follows. 
\begin{itemize}
	\item We propose TAP-DIP, the first influence propagation related problem in OSNs that explicitly considers propagation rate increase. We support the validity of the model by data analysis results from crawled retweets in around 4,000 Twitter trending topics.  
	\item We propose the algorithm FAST to solve TAP-DIP. It is the first solution to TAP-DIP with approximation ratio of $2\log |V|$. The two subroutines of FAST, MMinSeed and Multi-IM, are the first algorithms that can efficiently solve the MTAP problem and the MIM problem, respectively.
    \item We perform extensive experiments on various real OSN data sets to demonstrate both the efficiency of our proposed algorithms and the drastic difference in the solutions when considering rate increase. 
\end{itemize}

\textbf{Related Work.}
Kempe et al. \cite{kempe2003maximizing} are the first to study influence propagation in OSNs mathematically. Their focus was on the Influence Maximization problem (IM), which drew much attention in the research community \cite{Leskovec07,Chen10,Goyal11b,du2013scalable,Cohen14,Borgs14,Tang14,Tang15,nguyen2016targeted,Nguyen2016stop,li2017approximate}. Another major problem is TAP \cite{long2011minimizing,goyal2013minimizing,nguyen2013least,zhang2014minimizing,dinh2014cost,kuhnle2017scalable}. The main propagation model adopted in the papers is the Triggering model \cite{kempe2003maximizing} or its variations, the Independent Cascading (IC) model or the Linear Threshold (LT) model. Another model that considers variation in propagation rate is the continuous time diffusion model \cite{du2013scalable}. However, both models assume known and fixed parameters for the diffusion, which may not represent the real-world scenarios. 

As even computing the exact influence is \#P hard \cite{Chen10}, the mainstream approach of solving IM or TAP relies extensively on sampling, which is inefficient (due to many redundant samples) until Borgs et al. proposed the Reverse Influence Sampling (RIS) method in \cite{Borgs14}. The RIS method was further refined in \cite{Tang14,Tang15,nguyen2016targeted,Nguyen2016stop} for better time complexity. However, the RIS method was not yet applied to solve TAP or MIM, nor can it consider the DIP model. The only exception is \cite{pan2017dynamic}, from which this paper is extended. 

\textbf{Organization.} 
The rest of the paper is organized as follows. In Section \ref{sc:model}, we present our analysis on propagation rates, describe the DIP model and the TAP-DIP problem. Section \ref{sc:los} and \ref{sc:mminseed} discuss our solution, FAST to TAP-DIP. The performance of FAST and the behavior of the DIP model are analyzed in Section \ref{sc:exp}. Section \ref{sc:conclusion} concludes the paper and provides some insights on how FAST can be extended to solve a more generalized TAP-DIP problem.

\section{Model and Problem Definition}\label{sc:model}
In this section, we first analyze the Twitter data we crawled, which provides solid evidence that being trending will highly likely increase the propagation rate of a topic. Then we introduce the graph model of the OSN and the dynamic influence propagation model. We present the formal definition of TAP-DIP at the end of this section.  

\subsection{Analysis of Twitter Data}\label{ssc:dataanalysis}
We crawled the tweet stream data for 5,049 different Twitter trending topics in the US using the REST APIs\footnote{\url{https://dev.twitter.com/rest/public}}, during the period of Nov. 2016 to Apr. 2017. Specially, we collected the \textit{retweets} whose times are within three days of the time that the topic \textit{first} became trending. In order to decide the trending times for the topics, we first maintain the collection of all trending topics in three days and then crawl the current trending topics every 5 minutes. The trending time is considered as the first time that a new trending topic is recorded. We also update the collection when necessary. When the trending time of a topic is decided, we can use the Search API in REST\footnote{\url{https://dev.twitter.com/rest/public/search}} to fetch the historical retweets within the desired times. The retweets are separated by the trending time and into two groups, before trending and after trending, as our major goal is to demonstrate that the propagation rate, which is the reciprocal of the time difference between the retweet and the original tweet (retweet delay) in this case, increases after the topic being trending. We omit the topics having less than 100 retweets before/after trending to avoid outliers such as promoted trends and we are left with 3,988 topics after this step. For each remaining topic, we calculated the time difference for all its retweets. Based on the arrays of time differences before/after trending, we can decide whether the time difference decreased (or equivalently, propagation rate increased) after trending, using KS-test \cite{degroot2011kolmogorov} and t-test.

\begin{figure}[!ht]
\centering
		\includegraphics[width = 0.75\linewidth]{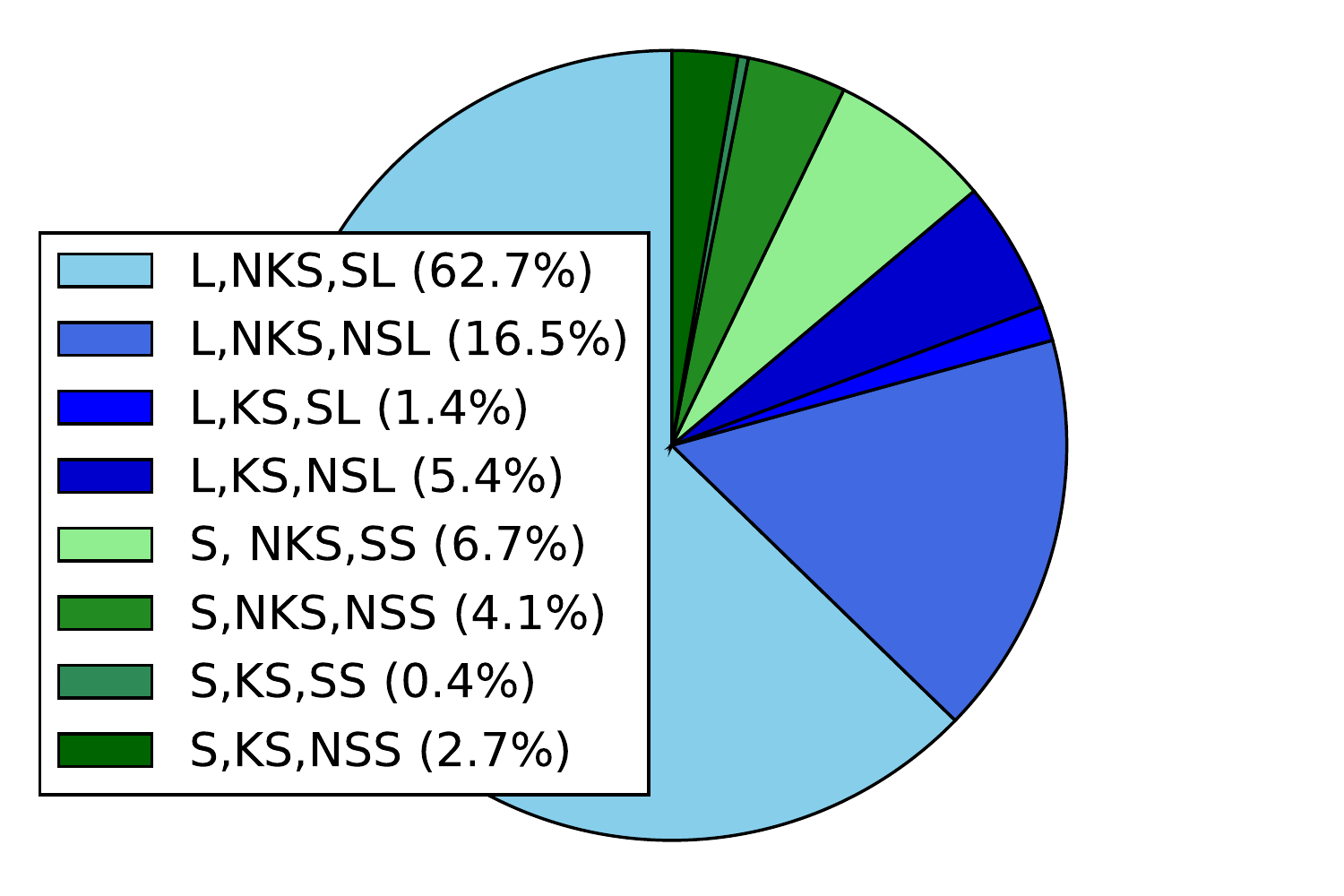}		
        \caption{Statistical Test Results. L/S means the average propagation rate after trending is larger/smaller than before. Then SL/NSL or SS/NSS denotes if it is significantly large/small by t-test. NKS/KS means reject or cannot reject the null hypothesis of KS test.}\label{fig:testres}
\end{figure}

\begin{figure}[!ht]
\centering
		\includegraphics[width = 0.75\linewidth]{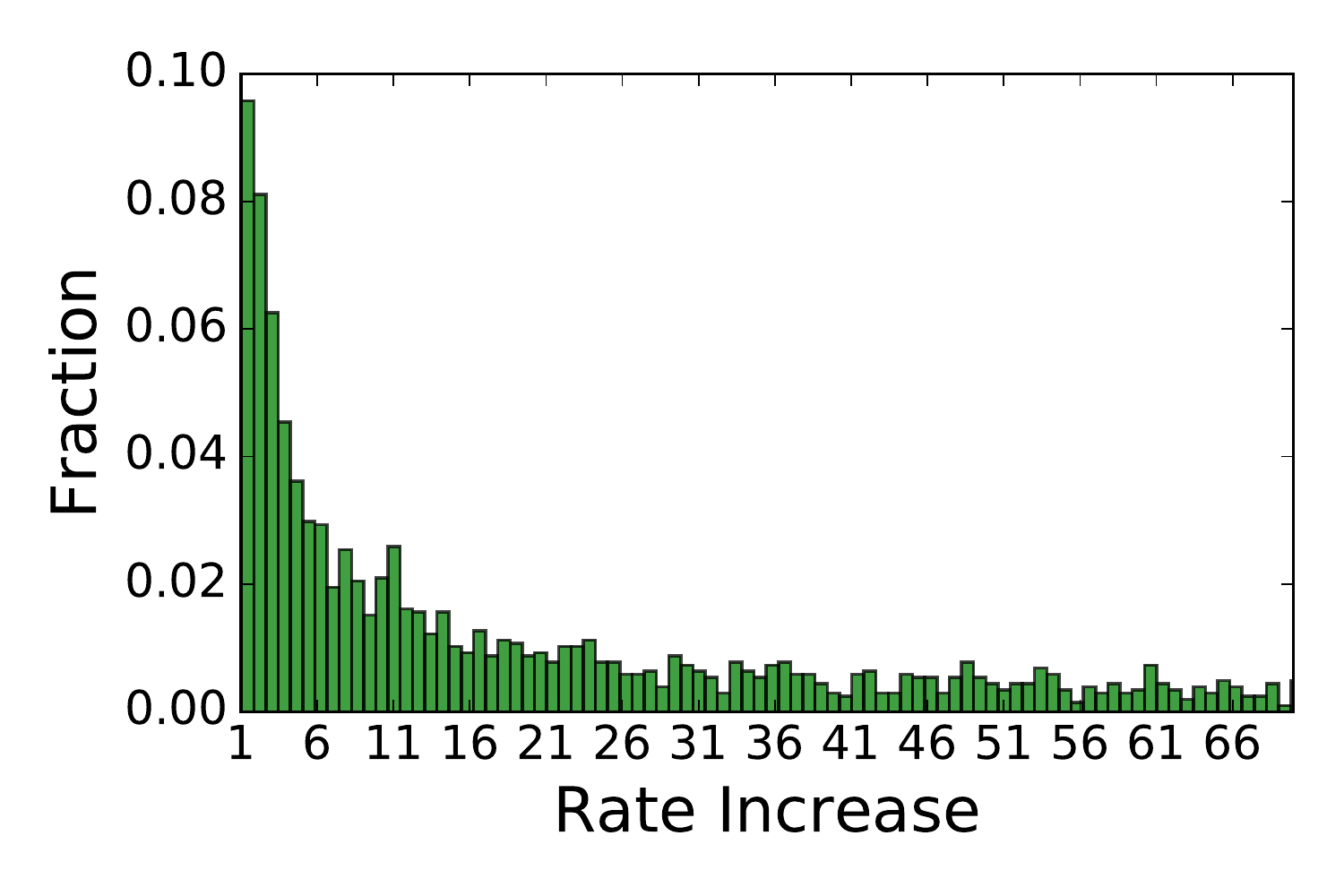}
        \caption{Distribution of Rate Increase}				\label{fig:quotient}
\end{figure}

In Fig.~\ref{fig:testres}, we present the test results of all the 3,988 topics. It is clear that increment of propagation rate after topic being trending is a common phenomenon, as $86.1\%$ of the topics have shorter average time delay for retweeting. Among those topics (with label ``L''), $72.8\%$ have significantly increase in propagation rate (``L,NKS,SL''), confirmed by both KS-test and t-test. 

We further study the distribution of propagation rate increase of those increased significantly (2,502 topics in total). Define the propagation rate increase as $\frac{\text{avg. retweet delay before trending}}{\text{avg. retweet delay after trending}}$, we obtain the histogram of rate increase in Fig.~\ref{fig:quotient}, where the y-axis denote the fraction of topics having rate increase in the range. The rate increase distribution is heavy tailed with some concentration on small numbers (1-10). A closer inspection reveals that higher rate increases are likely due to special events. For example, the topic ``Tim Duncan'' was trending on Dec. 19, 2016, the day right after the former NBA star's jersey retirement. The average retweet delay before trending was about 4 days and for after trending, only 4 hours, which leads to a 24 times rate increase. In a more extreme case, the topic ``President Trump" was trending on the inauguration day and the propagation rate increased 100 times. For those topics, being trending may only have minor impact on their popularity. For viral marketing related topics, the rate increases are usually moderate. For instance, the topics ``And iPhone'' (product), ``Cuisinart'' (brand) had $6$ and $8$ times increase in rate after being trending, respectively.

\subsection{The DIP Model}\label{ssc:dip}
We abstract the OSN to be a directed, connected graph $G=(V,E)$, where $V$ denotes all the users in the OSN, and $E$ corresponds to the relationships among the users (follow, friend, etc.) Each edge $(u,v)\in E$ is associated with a weight $p_{uv}\in [0,1]$ and a probability density function $l_{uv}(\tau)$, which are used to characterize the influence propagation model that is detailed in the following. Also, we consider the propagation rate at time $t$ as $\rho(t)$, which is defaulted at $1$ and $\rho(t)>1$ means a faster propagation. 

To model the change in influence propagation rate while considering the impact of social relationship strength, we combine the IC model and the Continuous-Time Diffusion Model into the Continuous IC Model (CIC), whose definition is as follows. We denote the initial set of activated (influenced) nodes as $S$. 
\begin{defn}[Continuous IC Model]
Consider a graph $G=(V,E)$ with $l_{uv}(\tau)$ and $p_{uv}$ defined on each edge $(u,v)\in E$. The influence diffusion process starts when all nodes in $S$ are activated at time $t=0$ and all other nodes remain unactivated. When node $u$ is activated at time $t$, each neighbor $v$ of $u$ will be activated at time $t+\tau/\rho$ with probability $p_{uv}$ where $\tau$ follows the probability density function $l_{uv}(\tau)$. Once a node is activated, it will never be deactivated. The process stops when no more nodes can be activated. 
\end{defn}
Based on the CIC model, a trending topic can propagate faster, as it has a shorter delay in transmission. We denote $I_{t}(S)$ as the total number of nodes influenced at time $t$, given the initial seed set $S$. Also, we write $\mathbb{I}_{T}(S)=E[I_{t}(S)]$ as the \textit{influence spread} of seed set $S$ up to time $t$. 

With the definition of $I_{t}(S)$ and the connection between trending and faster propagation, we can characterize propagation rate $\rho$ as a function of time $t$. Based on the idea of \cite{zhang2014minimizing}, a topic is popular when the number of influenced nodes reaches a threshold. In this paper, we consider a topic becomes trending at time $t$ when the number of nodes influenced by the seed set $S$ is larger than fraction $\phi$ of nodes in $V$. When the seed set $S\subseteq V$ is known and rate parameter $r>1$, we have:
\[
  \rho(t)=\begin{cases}
               1, I_{t}(S)< \phi |V|\\
               r, I_{t}(S)> \phi |V| 
            \end{cases}
\]
In this section and for the major part of the paper, we focus on the case when the propagation rate can change only once, as such a case corresponds to our findings from data analysis. We will discuss the possibility of having multiple propagation rate changes in Section \ref{sc:conclusion}.

\subsection{The TAP-DIP Problem.}
In a viral marketing campaign, the goal can often be influencing at least a certain number of users within a period of time. For example, a company showcasing its new product will want it to be exposed to a certain percentage of the market within a few days after the release. The company needs to choose some users as seeds to propagate the product information, and seeding each user incurs a cost. For cost-effectiveness, companies always want to minimize the number of seed users, when the costs of seeding the users are the same. Thus, the problem can be rephrased as finding a seed set with minimum size such that the number of activated nodes can be at least a certain threshold $\eta |V|$. Such a problem is termed as the \textit{Threshold Activation Problem (TAP)}. In a typical TAP problem, the underlying influence propagation model is often static, that the parameters of the model will not change overtime. TAP-DIP, however, is the version of TAP that considers dynamic influence propagation models. In this paper, we focus on the TAP problem with the propagation model CIC. 

\begin{defn}[TAP-DIP]
Given an OSN $G=(V,E)$ with $l_{uv}(\tau)$ and $p_{uv}$ defined on each edge $(u,v)\in E$, the activation threshold $\eta$, the trending triggering threshold $\phi$, the propagation rate $r$, the time limit $T$, TAP-DIP asks to find a seed set $S$ with minimum size such that the influence spread $\mathbb{I}_{T}(S)$ is at least $\eta|V|$ within time $T$.
\end{defn}

In the following two sections, we propose FAST, our solution to TAP-DIP. For conciseness, most of the proofs are placed in the appendix.
\section{FAST: Solution to TAP-DIP}\label{sc:los}
\subsection{Overview}\label{ssc:losoverview}
For all existing solutions to influence propagation related problems, a known propagation model is required, which is not possible in TAP-DIP as the propagation rate may change based on number of influenced nodes. It seems that we have to derive brand new solutions for TAP-DIP, however, it is not the case when we can provide a key value: the time $t$ that the propagation rate changes. In TAP-DIP, this value is a variable based on number of influenced nodes. When added as an input, it defines fixed propagation models, yet it also brings in the constraint that the number of influenced nodes must meet the triggering threshold $\phi$ at time $t$. Thus, with a fixed $t$, TAP-DIP can be reduced to a problem of finding the minimum seed set to reach $\phi|V|$ and $\eta|V|$ thresholds at time $t$ and $T$, respectively. Notice that the propagation models in times $[0,t]$ and $(t,T]$ are different due to the change in propagation rate. We term the problem as Multi-TAP (MTAP), which is a generalization of the TAP problem that only considers satisfying a single threshold $\eta|V|$ at time $T$, but is still much more accessible than TAP-DIP itself. 

As the actual solution to MTAP is complicated, we delay its details in Sect.~\ref{sc:mminseed} and assume for now that it is available in a blackbox. We can feed a $t$ value to it and obtain a seed set $S_t$. Clearly, the solution to TAP-DIP is the $S_t$ with minimum cardinality. However, the function $H(t) = |S_t|$ has no closed form and we have to use global optimization techniques to find its minimum. At the core of our FAST algorithm is the Lipschitz optimization \cite{horst2013handbook} framework, which can find globally near-optimal values of a function given that the function is Lipschitz continuous, with limited calls to function value calculation. 

\begin{defn}[Lipschitz Continuity]
A function $f(x)$ defined on $X$ is Lipschitz Continuous if there exists a real constant $L\geq 0$, such that for all $x_1,x_2\in X$, 
\begin{align}
	|f(x_1)-f(x_2)|\leq L|x_1-x_2|\label{eq:lipschitz}
\end{align}
\end{defn}

In the following, we first prove that an approximation $H'(t)$ of the function $H(t)$ is Lipschitz continuous, and then propose the algorithm FAST that finds near-optimal values of $H(t)$ over $t$.

\subsection{The FAST algorithm}\label{ssc:losdetail}

If we denote $\Delta$ as the minimum distance between any two possible values of $t$, we can easily prove that $H(t)$ is Lipschitz continuous with $L=|V|/\Delta$. Unfortunately, Lipschitz optimization can hardly benefit from such a crude estimation of $L$. The larger the $L$ value, the more complicated the problem. Therefore, we introduce a relaxed version of $H(t)$ that a much smaller $L$ is achievable. 
 
We write the relaxed version of $H(t)$ as $H'(t)=S^*_s(t)+S^*_a(t)$, where $S^*_s(t)$ denotes the minimum number of nodes to guarantee $\phi$ fraction of influenced nodes in $G$ and therefore a speed-up at $t$; $S^*_a(t)$ denotes the minimum number of nodes to guarantee $\eta$ fraction of activation in $G$ \textit{given} a speed-up at $t$. Notice that $S^*_a(t)$ is calculated based on the assumption that the nodes triggered the speed-up did not influence any nodes in $G$. Therefore, $H'(t)$ is an upper bound on the number of required nodes, as stated in the following lemma.

\begin{lemma}\label{lemma:ddipsinglebound}
$H'(t^{*'})\leq 2H(t^*)$ where $H'(t^{*'})\leq H'(t), H(t^*)\leq H(t), \forall t\in [0,T]$.
\end{lemma}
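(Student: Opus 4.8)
The plan is to evaluate the relaxed objective $H'$ at the optimal time $t^*$ of the true objective $H$, bound $H'(t^*)$ by $2H(t^*)$, and then invoke the optimality of $t^{*'}$ for $H'$ to conclude. Since $t^{*'}$ minimizes $H'$ over $[0,T]$, we immediately have $H'(t^{*'}) \leq H'(t^*)$, so it suffices to show $H'(t^*) = S^*_s(t^*) + S^*_a(t^*) \leq 2H(t^*)$. I would obtain this by establishing the two bounds $S^*_s(t^*) \leq H(t^*)$ and $S^*_a(t^*) \leq H(t^*)$ separately, each witnessed by a \emph{single} optimal MTAP seed set.

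First I would fix an optimal MTAP solution $S_{t^*}$ for the sub-problem at $t^*$, so that $|S_{t^*}| = H(t^*)$. By definition of MTAP, activating $S_{t^*}$ at time $0$ drives the influence to at least $\phi|V|$ at time $t^*$ (which is precisely what triggers the speed-up at $t^*$, under rate $1$ up to that instant) and to at least $\eta|V|$ at time $T$ under the resulting dynamics. The core observation is that this one set is simultaneously feasible for both decoupled subproblems defining $H'$.

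For the speed-up term, since $S_{t^*}$ already guarantees $\phi|V|$ influenced nodes at $t^*$, it is a feasible candidate for the subproblem defining $S^*_s(t^*)$; as $S^*_s(t^*)$ is the minimum cardinality of any such set, $S^*_s(t^*) \leq |S_{t^*}| = H(t^*)$. For the activation term, I would observe that $S_{t^*}$, activated at time $0$ and subjected to the rate increase at $t^*$, reaches $\eta|V|$ by $T$ through its own diffusion alone; this is exactly the feasibility requirement for $S^*_a(t^*)$. Hence $S^*_a(t^*) \leq |S_{t^*}| = H(t^*)$. Adding the two bounds gives $H'(t^*) \leq 2H(t^*)$, and chaining with $H'(t^{*'}) \leq H'(t^*)$ yields the claim.

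The step I expect to require the most care is the activation bound $S^*_a(t^*) \leq H(t^*)$: I must justify that the pessimistic decoupling in the definition of $S^*_a$ — where the nodes used to trigger the speed-up are assumed to contribute no influence toward $\eta|V|$ — does not invalidate using $S_{t^*}$ as a witness. The point to make precise is that $S_{t^*}$ here plays the role of the \emph{activation} seed set itself, not that of a separate triggering set, so its full diffusion is counted toward $\eta|V|$; the ``contribute nothing'' assumption only weakens $H'$ as an upper bound on $H$ and never removes the influence of $S_{t^*}$ itself. Establishing this cleanly, rather than conflating the triggering and activation roles of the seeds, is the main obstacle.
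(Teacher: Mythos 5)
Your proposal is correct and follows essentially the same route as the paper: both reduce to bounding $H'(t^*)$ via the optimality of $t^{*'}$, then use the fact that an optimal MTAP solution at $t^*$ is simultaneously feasible for the two decoupled subproblems, giving $\max(S^*_s(t^*),S^*_a(t^*))\leq H(t^*)$ and hence $H'(t^*)=S^*_s(t^*)+S^*_a(t^*)\leq 2H(t^*)$. Your write-up merely makes explicit the witness argument that the paper compresses into ``as otherwise, at least one of $S^*_s(t^*),S^*_a(t^*)$ is not minimum,'' including the point that the decoupling assumption in $S^*_a$ does not affect feasibility of the witness.
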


It is clear that $S^*_s(t),S^*_a(t)$ are monotonically decreasing/increasing with  $t$, respectively. Such properties lead to a refined result on the Lipschitz continuity of $H'(t)$.

\begin{lemma}\label{lemma:locallipschitzconstant}
Given an interval $[t_1,t_2]$ and function values $H'(t_1)=S^*_s(t_1)+S^*_a(t_1), H'(t_2)=S^*_s(t_2)+S^*_a(t_2)$, $H'(t)$ is Lipschitz continuous over $[t_1,t_2]$ with constant 
\begin{align}
l_{t_2}=\frac{1}{\Delta}\max \{S^*_s(t_1)-S^*_s(t_2),S^*_a(t_2)-S^*_a(t_1)\}\label{eqn:locallipschitz}
\end{align}
\end{lemma}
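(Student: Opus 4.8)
The plan is to verify the Lipschitz condition \eqref{eq:lipschitz} directly for an arbitrary pair of points $x_1, x_2 \in [t_1, t_2]$, exploiting the opposite monotonicities of $S^*_s$ and $S^*_a$ together with the discreteness of the set of admissible speed-up times. Without loss of generality I would take $x_1 < x_2$ (the case $x_1 = x_2$ being trivial) and expand the difference as
\begin{align*}
H'(x_1) - H'(x_2) = \bigl(S^*_s(x_1) - S^*_s(x_2)\bigr) - \bigl(S^*_a(x_2) - S^*_a(x_1)\bigr).
\end{align*}
Because $S^*_s$ is monotonically decreasing and $S^*_a$ is monotonically increasing in $t$ (as noted just before the statement), both parenthesized quantities are nonnegative, so $H'(x_1) - H'(x_2)$ is a difference of two nonnegative numbers.

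The first key step is the elementary inequality $|a - b| \le \max\{a, b\}$, valid for all $a, b \ge 0$. Applying it with $a = S^*_s(x_1) - S^*_s(x_2)$ and $b = S^*_a(x_2) - S^*_a(x_1)$ gives
\begin{align*}
|H'(x_1) - H'(x_2)| \le \max\bigl\{S^*_s(x_1) - S^*_s(x_2),\ S^*_a(x_2) - S^*_a(x_1)\bigr\}.
\end{align*}
Next I would enlarge each argument of the maximum to the full interval: since $t_1 \le x_1 < x_2 \le t_2$, monotonicity of $S^*_s$ yields $S^*_s(x_1) - S^*_s(x_2) \le S^*_s(t_1) - S^*_s(t_2)$, and monotonicity of $S^*_a$ yields $S^*_a(x_2) - S^*_a(x_1) \le S^*_a(t_2) - S^*_a(t_1)$. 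Hence the maximum is at most $\Delta\, l_{t_2}$, with $l_{t_2}$ as defined in \eqref{eqn:locallipschitz}.

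To finish, I would invoke the definition of $\Delta$ as the minimum separation between any two admissible values of $t$: whenever $x_1 \ne x_2$ we have $|x_1 - x_2| \ge \Delta$, so the factor $|x_1 - x_2|/\Delta \ge 1$ may be inserted for free, giving
\begin{align*}
|H'(x_1) - H'(x_2)| \le \Delta\, l_{t_2} \le \frac{|x_1 - x_2|}{\Delta}\,\Delta\, l_{t_2} = l_{t_2}\,|x_1 - x_2|,
\end{align*}
which is exactly \eqref{eq:lipschitz}. The step I expect to require the most care is this last one: it relies crucially on the domain of $H'$ being the discrete grid of admissible speed-up times rather than a continuum, so that distinct points are separated by at least $\Delta$. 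Making this restriction explicit — and relying on the already-noted monotonicity of $S^*_s$ and $S^*_a$ — is precisely what lets the crude worst-case constant $|V|/\Delta$ collapse to the much smaller, interval-dependent constant $l_{t_2}$ that Lipschitz optimization can actually exploit.
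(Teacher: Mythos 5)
Your proof is correct and follows essentially the same route as the paper's: the same decomposition of $H'(x_1)-H'(x_2)$ into the two nonnegative monotone differences, the same bound by the maximum of the endpoint differences (the paper derives $|a-b|\leq\max\{a,b\}$ inline via separate upper and lower bounds rather than quoting it as a lemma), and the same final use of $|x_1-x_2|\geq\Delta$ to absorb the constant. The only difference is cosmetic ordering — you apply the max inequality before enlarging to the interval endpoints, the paper enlarges first — so there is nothing substantive to add.
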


The following lemma adapted from \cite{lera2013acceleration} ensures a lower bound on function values within any interval of $H'(t)$.

\begin{lemma}\label{lemma:locallowerbound}
When $H'(t)$ is Lipschitz continuous over $[t_1,t_2]$ with constant $L$,
\begin{align}
	\min_{t\in [t_1,t_2]}H(t)\geq \frac{H'(t_1)+H'(t_2)}{2}-\frac{l_{t_2}(t_2-t_1)}{2}\label{eqn:locallowerbound}
\end{align}
\end{lemma}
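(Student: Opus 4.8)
The plan is to follow the standard Lipschitz lower-envelope argument of \cite{lera2013acceleration}, applied to the smoothed surrogate $H'$, whose Lipschitz continuity over $[t_1,t_2]$ with constant $l_{t_2}$ is already supplied by Lemma~\ref{lemma:locallipschitzconstant} (so the hypothesis constant is $L=l_{t_2}$). First I would unfold the two-sided estimate \eqref{eq:lipschitz} into two one-sided affine lower bounds, one hinged at each endpoint: for every $t\in[t_1,t_2]$, Lipschitz continuity gives both $H'(t)\ge H'(t_1)-l_{t_2}(t-t_1)$ and $H'(t)\ge H'(t_2)-l_{t_2}(t_2-t)$. Geometrically these are the line of slope $-l_{t_2}$ through $(t_1,H'(t_1))$ and the line of slope $+l_{t_2}$ through $(t_2,H'(t_2))$.

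Next I would take their pointwise maximum $g(t)=\max\{H'(t_1)-l_{t_2}(t-t_1),\,H'(t_2)-l_{t_2}(t_2-t)\}$, which is a valid lower bound for $H'$ throughout the interval. Being the maximum of a decreasing and an increasing affine function with opposite slopes $\pm l_{t_2}$, $g$ is convex and attains its minimum exactly at the breakpoint $t^{\circ}$ where the two branches cross. Setting the branches equal, $H'(t_1)-l_{t_2}(t^{\circ}-t_1)=H'(t_2)-l_{t_2}(t_2-t^{\circ})$, yields $t^{\circ}=\frac{t_1+t_2}{2}+\frac{H'(t_1)-H'(t_2)}{2l_{t_2}}$, and substituting this back into either branch telescopes to $\frac{H'(t_1)+H'(t_2)}{2}-\frac{l_{t_2}(t_2-t_1)}{2}$, which is exactly the right-hand side of \eqref{eqn:locallowerbound}. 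Since $H'(t)\ge g(t)\ge g(t^{\circ})$ for all $t\in[t_1,t_2]$, this lower-bounds the surrogate on the whole cell and delivers the claim.

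The one step I expect to be the crux — the only part that is not bookkeeping — is checking that the breakpoint $t^{\circ}$ actually lies in $[t_1,t_2]$, so that the vertex of $g$, rather than an endpoint, governs the minimum. This is equivalent to $|H'(t_1)-H'(t_2)|\le l_{t_2}(t_2-t_1)$, i.e.\ to $\bigl|t^{\circ}-\tfrac{t_1+t_2}{2}\bigr|\le\tfrac{t_2-t_1}{2}$, and it is guaranteed precisely by the endpoint instance of the Lipschitz estimate \eqref{eqn:locallipschitz} of Lemma~\ref{lemma:locallipschitzconstant}. In the degenerate case where that estimate is tight, $t^{\circ}$ collapses onto an endpoint and $g$ is monotone, but then the governing endpoint value still dominates the displayed right-hand side, so \eqref{eqn:locallowerbound} survives either way. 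Throughout, it is the monotone, slowly-varying surrogate $H'$ — and not the raw $H(t)=|S_t|$, whose only crude Lipschitz constant $|V|/\Delta$ would render the vertex value vacuous — that furnishes the admissible per-interval lower bound used to prune the search in FAST.
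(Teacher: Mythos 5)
Your argument is correct, and it is exactly the standard Piyavskii--Shubert lower-envelope argument that the paper never writes out: Lemma~\ref{lemma:locallowerbound} is merely ``adapted from \cite{lera2013acceleration}'' and no proof of it appears in the appendix, so your derivation supplies precisely what the paper delegates to the citation. The two one-sided affine minorants hinged at the endpoints, their pointwise maximum $g$, the breakpoint $t^{\circ}=\frac{t_1+t_2}{2}+\frac{H'(t_1)-H'(t_2)}{2l_{t_2}}$ (which, you might note, is exactly the next evaluation point $t^{k+1}$ chosen inside Algorithm~\ref{alg:lipschitz}), the telescoping of the vertex value to the right-hand side of \eqref{eqn:locallowerbound}, and the check that $t^{\circ}\in[t_1,t_2]$ via the endpoint Lipschitz estimate are all sound, including the degenerate tight case.

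One caveat, which is a defect of the paper's statement rather than of your proof: as printed, the left-hand side of \eqref{eqn:locallowerbound} is $\min_t H(t)$, whereas what you bound is $\min_t H'(t)$. These are not interchangeable. The paper establishes $H(t)\leq H'(t)$ pointwise (the union of the two separately optimal seed sets is feasible for the joint problem, cf.\ the proof of Lemma~\ref{lemma:ddipsinglebound}), so a lower bound on the surrogate $H'$ does \emph{not} transfer downward to $H$; e.g.\ if $H'$ were constant at $c$ on the interval with tiny $l_{t_2}$ while $H\approx c/2$, the inequality with $H$ on the left would fail. The version you proved --- with $H'$ on the left --- is the one the paper actually invokes: in the proof of Theorem~\ref{theorem:lipschitzratio}, $R_i$ is used as a lower bound on the global minimum of $H'(\cdot)$, not of $H(\cdot)$. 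So your silent substitution of $H'$ for $H$ (like your identification of the hypothesis constant ``$L$'' with $l_{t_2}$) resolves a typo in the intended direction, but it deserves to be stated explicitly rather than passed over with ``delivers the claim.''
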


With Lemmas \ref{lemma:locallipschitzconstant} and \ref{lemma:locallowerbound}, we propose FAST for finding the global minimum of $H'(t)$ over $[0,T]$, assuming that we have access to the values of $H(t)$ and $H'(t)$, which will be detailed in Sect.~\ref{sc:mminseed}.
\begin{algorithm}
	\caption{Finding Anticipated Speedup Time (FAST)}
	\label{alg:lipschitz}
    \begin{algorithmic}
    \REQUIRE $H'(t), H(t), t\in[0,T]$
	\ENSURE The global minimizer $\bar{t}$ of $H'(t)$, $H(\bar{t})$
    \STATE Calculate $H'(0), H'(T)$ by Alg. \ref{alg:mminseed}.
    \STATE Let $t_1=0,t_{2}=T$, $i,k=2$, Calculate $l_{i}$ based on \eqref{eqn:locallipschitz}
	\WHILE{$|t_i-t_{i-1}|\geq \frac{1}{l_{i}}$}
      \STATE $t^{k+1}=\frac{t_i+t_{i-1}}{2}+\frac{H'(t_{i-1})-H'(t_i)}{2l_i}$, $k$++
      \STATE Calculate $H'(t^{k+1})$ by Alg. \ref{alg:mminseed}
      \STATE Renumber all points such that $0\leq t_1\leq \cdots\leq t_k\leq T$
      \FOR{Each interval $[t_{j-1},t_j]$}
        \STATE Calculate $l_j$ based on \eqref{eqn:locallipschitz} and $R_j$ based on rhs of \eqref{eqn:locallowerbound}
      \ENDFOR
      \STATE Let $i=\arg\min_{j=1,\cdots,k}\{R_j\}$
	\ENDWHILE
    \STATE $H'(\bar{t})=\min \{H'(t_i)|i=1,\cdots,k\}$
    \STATE $\bar{t} = \arg\min\{H'(t_i)|i=1,\cdots,k\}$
    \STATE Calculate $H(\bar{t})$ by Alg. \ref{alg:mminseed}.
	\end{algorithmic}
\end{algorithm}

FAST utilizes the Lipschitz continuity of $H'(t)$. Intuitively, it iteratively finds the interval with minimum lower bound by Lemma~\ref{lemma:locallowerbound} and calculate a new value in the interval, until the interval is small enough. By Lipschitz continuity, the minimum of $H'(t)$ is close to one of the calculated values. We further refine the result by calculating $H(\bar{t})$ instead of $H'(\bar{t})$ at the end. The following theorem guarantees the solution quality.
\begin{theorem} \label{theorem:lipschitzratio}
$H(\bar{t})\leq 2H(t^*)+1$.
\end{theorem}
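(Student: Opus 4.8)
The plan is to chain three facts. First, $H'$ pointwise over‑estimates the true requirement (as noted in the discussion preceding Lemma~\ref{lemma:ddipsinglebound}), so $H(\bar t)\le H'(\bar t)$. Second, the stopping rule of Algorithm~\ref{alg:lipschitz} together with the Lipschitz lower bound forces the best value $H'(\bar t)$ found by FAST to lie within $\tfrac12$ of the global minimum $H'(t^{*'})$ of the relaxed function. Third, Lemma~\ref{lemma:ddipsinglebound} converts this into a factor‑two bound, $H'(t^{*'})\le 2H(t^*)$. Assembling the chain gives $H(\bar t)\le H'(\bar t)< H'(t^{*'})+\tfrac12\le 2H(t^*)+\tfrac12\le 2H(t^*)+1$.

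The heart of the argument is the second fact, which I would derive from the termination condition and Lemma~\ref{lemma:locallowerbound} applied to $H'$. Let $0=t_1\le\cdots\le t_k=T$ be the nodes evaluated when the loop halts, and for each subinterval let $R_j=\tfrac{H'(t_{j-1})+H'(t_j)}{2}-\tfrac{l_j(t_j-t_{j-1})}{2}$ be the lower bound from Lemma~\ref{lemma:locallowerbound}, whose validity rests on the per‑interval constants $l_j$ of Lemma~\ref{lemma:locallipschitzconstant} (justified by the monotonicity of $S^*_s$ and $S^*_a$). Since $[0,T]=\bigcup_j[t_{j-1},t_j]$ and each $R_j$ lower‑bounds the minimum of $H'$ on its piece, the global minimum obeys $H'(t^{*'})\ge\min_j R_j=R_i$, where $i=\arg\min_j R_j$ is exactly the interval chosen at the final iteration. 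The loop exits only when $|t_i-t_{i-1}|<1/l_i$, i.e. $\tfrac{l_i(t_i-t_{i-1})}{2}<\tfrac12$. Because $\bar t$ is by definition the evaluated node minimizing $H'$, I obtain $H'(\bar t)\le\min\{H'(t_{i-1}),H'(t_i)\}\le\tfrac{H'(t_{i-1})+H'(t_i)}{2}=R_i+\tfrac{l_i(t_i-t_{i-1})}{2}<R_i+\tfrac12\le H'(t^{*'})+\tfrac12$, which is the desired near‑optimality.

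With this in place the remaining steps are immediate, and the final slack of $1$ is a conservative rounding of the $\tfrac12$ produced by the termination test; indeed, since every $H'$ value is an integer cardinality and $t^{*'}$ globally minimizes $H'$, the strict inequality $H'(t^{*'})\le H'(\bar t)<H'(t^{*'})+\tfrac12$ even forces $H'(\bar t)=H'(t^{*'})$, so the $+1$ also comfortably absorbs any approximation error introduced by the blackbox MMinSeed subroutine.

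The step I expect to be the main obstacle is the second one: certifying \emph{global} near‑optimality from the single interval that survives the stopping test. This hinges on two simultaneous extremal facts — that $R_i$ is the \emph{smallest} per‑interval lower bound (so it bounds $H'(t^{*'})$ from below) while $H'(\bar t)$ is the \emph{smallest} evaluated value (so it is bounded above by the average of the endpoints of interval $i$) — and on confirming that $\bar t$ is genuinely an evaluated node, which Algorithm~\ref{alg:lipschitz} guarantees. Care is also needed that Lemma~\ref{lemma:locallowerbound} is invoked for the relaxed function $H'$ on each piece, so that the factor two of the final bound is supplied cleanly by Lemma~\ref{lemma:ddipsinglebound} rather than conflated with the Lipschitz discretization error.
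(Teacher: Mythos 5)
Your proposal is correct and follows essentially the same route as the paper's proof: lower-bound the global minimum of $H'$ by the smallest interval bound $R_i$, upper-bound the best evaluated value $H'(\bar t)$ via the endpoints of that same interval and the stopping test $|t_i-t_{i-1}|<1/l_i$, then chain with $H\le H'$ and Lemma~\ref{lemma:ddipsinglebound}. In fact your algebra is slightly cleaner than the paper's (you use $\min\{H'(t_{i-1}),H'(t_i)\}\le\frac{H'(t_{i-1})+H'(t_i)}{2}$ directly, yielding slack $\tfrac12$, whereas the paper incurs an extra $\frac{l_i(t_i-t_{i-1})}{2}$ term and lands exactly at $1$), so the claimed bound holds with room to spare.
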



\section{MMinSeed: Solution to Multi-TAP}\label{sc:mminseed}
In this section, we describe the missing piece in Section \ref{sc:los}: how to calculate $H'(t),H(t)$, which completes the full picture of FAST. As discussed in Sect.~\ref{ssc:losoverview}, calculating $H(t)$ is actually solving MTAP with two thresholds. For $H'(t)$, its two components $S^*_s(t),S^*_a(t)$ can be seen as TAP instances with threshold $\theta|V|$ and time limit $t$, threshold $\eta|V|$ and time limit $T$, respectively. Therefore, a solution to MTAP suffices for completing FAST. In the following, we propose the first efficient solution to a generalized version of MTAP (defined below) that considers multiple thresholds and time limits. Thus, our proposed solution MMinSeed is not only capable for solving $H(t),H'(t)$, but also applicable to the general scenarios. 
\begin{defn}[MTAP]
Given $G(V,E)$, $V^l\subseteq V, l=1,...,L$, thresholds $\eta_1,\cdots,\eta_L$, time limits $t^1,...,t^L$ and the propagation model, MTAP asks for a seed set $S$ that can influence, by expectation, $\eta_l|V^l|$ nodes in each subset within $t^l$.
\end{defn}
In MTAP, we assume that a trending threshold is met when the expected influence, but not the actual influence, is larger than the threshold as it is not possible to obtain the actual influence when calculating the seed set. Although it may not be exactly the same as in DIP, this assumption is still acceptable: the influence is usually concentrated at the expectation \cite{zhang2014minimizing}. We also demonstrate in our experiments that the performance of the algorithms are satisfactory, in which we run extensive simulations to demonstrate the performance when we trigger speedups by actual influence.




\subsection{The RIS Framework}
Due to the complexity from the probabilistic network, sampling is the most popular method to estimate the influence spread of a seed set in each ground set. Here we adopt the state-of-art Reverse Influence Sampling (RIS) technique \cite{Borgs14} for generating samples. Specially, we combine the sampling methods in two recent papers \cite{Tang15,nguyen2016targeted} to generate samples for each ground set under Continuous IC propagation model.

The RIS approach has two phases. In the sample generation phase, a number of samples are generated, where each sample consists of all nodes that can influence a random node in a realization of the probabilistic graph. In the seed set selection phase, a maximum coverage problem (with nodes as sets and samples as elements) is greedily solved to obtain the seed set for influence maximization. 

\subsection{The MMinSeed algorithm}
To adapt the RIS framework to solve MTAP, there are two obstacles. The first one is how to guide the solution to consider all the thresholds at the same time. A second and more challenging one is that, the RIS framework is designed for maximizing influence with a fixed number of seeds. Also, the number of samples required to guarantee a certain level of accuracy will increase with more seed nodes. However, MTAP asks for minimizing the seed set size, which is unknown and cannot be used to determine the number of required samples. 

In order to overcome the first obstacle, we need to design an objective function that satisfies the following conditions: (1) Maximizing the function will fulfill all thresholds. (2) When a threshold is fulfilled, additional influence to the corresponding ground set should not bring any benefit to the function (3) The function must be submodular. The first two conditions insure the correctness of the function, while the last one guarantees the performance, as otherwise no approximation ratio will exist. 

Based on the conditions, we design the function $f(S)$, which is defined as
\begin{align}
f(S)=\sum_{l\in L}\min\{\eta_l|V^{l}|, |V^l(S)|\}, \quad S\subseteq \mathcal{S}\label{eqn:defncombined}
\end{align}
where $V^l(S)$ denotes the nodes influenced by $S$ in the ground set $V^l$. Clearly, $f(S)$ is submodular and monotone increasing as it is the summation of submodular functions. Also, each ground set can contribute up to its threshold to the function value. Additionally, the maximum of this function can only be achieved when all thresholds are fulfilled.  


We describe MMinSeed in Alg.~\ref{alg:mminseed}. In MMinSeed, the process of finding the number of seeds utilizes the submodularity of $f(.)$. In each round, MMinSeed calculates the average gain in $f(.)$ of a seed node added in the previous round and the gap from the current $f$ value to the requirement. Then it decides how many new seed nodes are required, assuming all the new nodes can bring the gain equal to the average value calculated. The approach will reduce the number of calls to its subroutine, Alg. \ref{alg:multiim}. Comparing with binary search, the greatest advantage of this approach is that it will never choose a seed set size that is larger than necessary, which is guaranteed by submodularity. A larger seed set is not preferable since it leads to generating more samples, which is redundant and costs extra time.

\begin{algorithm}
	\caption{MMinSeed}
    \label{alg:mminseed}
    \begin{algorithmic}
      \REQUIRE Graph $G=(V,E)$, Ground sets $V^1,\cdots,V^L$ with thresholds $\eta_1,\cdots,\eta_L$, $\epsilon>0$
      \ENSURE Seed set $S\subseteq V$
      \STATE $f(S^*)=\sum_{l\in L}\eta_l|V^{l}|$.
      \STATE $j=1, j_{prev}=0$, $f=0, f_{prev}=0$, $S=\emptyset$
      \STATE Find $S$ using Alg. \ref{alg:multiim} with $|S|\leq j$, $f=\hat{f}(S)$
      \WHILE{$\hat{f}(S)<(1-\epsilon)f(S^*)$}
          \STATE $j += \lceil\frac{(1-\epsilon)f(S^*)-\hat{f}(S)}{(f-f_{prev})/(j-j_{prev})}\rceil$
          \STATE Find $S$ using Alg. \ref{alg:multiim} with $|S|\leq j$
      \ENDWHILE
    \end{algorithmic}
\end{algorithm}

The subroutine Multi-IM (Alg.~\ref{alg:multiim}), is the first efficient solution to the MIM problem. It maintains $L$ collections of samples $\mathcal{R}^1,\cdots,\mathcal{R}^l$ (one for each threshold) and it keeps generating new samples for each collection up to a given amount $N_{\mathcal{R}^l}$. Then, Alg. \ref{alg:multiim} greedily solves a submodular maximization problem with the submodular function $f(S)$ defined in \eqref{eqn:defncombined}, using at most $k$ nodes. The resulting set $S_k$ is used to verify if the number of samples intersect with $S_k$, $C_{\mathcal{R}^l}(S_k)$, is at least $\gamma$. If the verification is successful, $\mathcal{R}^l$ is enough to guarantee the accuracy of estimating the ground set. It then stops generating new samples for $\mathcal{R}^l$. Otherwise, it doubles $N_{\mathcal{R}^l}$, generating samples up to $N_{\mathcal{R}^l}$ and rerun the verification. When all $\mathcal{R}^l$ passed the verification, the solution $S_k$ is returned as output. 

\begin{algorithm}
	\caption{Multi-IM}
    \label{alg:multiim}
    \begin{algorithmic}
    \REQUIRE Graph $G=(V,E)$, Ground sets $V^1,\cdots,V^L$ with thresholds $\eta_1,\cdots,\eta_L$, Precision parameters $\epsilon>0$, $\delta\in(0,1)$
	\ENSURE Seed set $S\subseteq V$
   	\STATE Collection of samples $\mathcal{R}^l = \emptyset, l=1,\cdots,L$
    \STATE $\phi = \frac{(1-1/e)\sigma+\tau}{\epsilon}$, $\gamma=2(\phi^2 + \log \frac{3L^2}{(2L-1)\delta})$ 
    \STATE $N_{\mathcal{R}^l}=\gamma, ctn^l=true, l=1,\cdots,L$
	\WHILE{$\exists l$ s.t. $ctn^l == false$}
        \STATE $S_k$ = Greedy size $k$ solution to maximize $\hat{f}(S)$
    	\FOR{Each $l=1,\cdots,L$}
        	\IF{$ctn^l$}
                \IF{$C_{\mathcal{R}^l}(S_k)\geq\gamma$}
                	\STATE $ctn^l=false$
                \ELSE         	
                	\STATE Generate $N_{\mathcal{R}^l}$ samples for $\mathcal{R}^l$, $N_{\mathcal{R}^l}=2N_{\mathcal{R}^l}$
                \ENDIF
            \ENDIF
        \ENDFOR
    \ENDWHILE
    \end{algorithmic}
\end{algorithm}

\subsection{Theoretical Analysis.}\label{ssc:theoreticalanalysis}
Since $f(.)$ is submodular, the following result \cite{elomaa2010covering} holds using the greedy algorithm, if all $f(.)$ values can be obtained in polynomial time:
\begin{align*}
f(S_j^g)\geq (1-(1-1/k)^j)f(S_k^*)
\end{align*}
where $S_j^g$ is the collection of the first $j$ sets selected by the greedy algorithm to maximize $f(.)$ and $S_k^*$ is the optimal collection of size $k$. However, in Multi-IM, the values of $f(.)$ are not accurate but estimated by RIS. Hence, we can only have a weaker result as in Theorem~\ref{theorem:greedyguarantee}. 
\begin{theorem}\label{theorem:greedyguarantee}
Alg. \ref{alg:multiim} guarantees
\begin{align}
f(S_j^g)\geq (1-(1-1/k)^j-\epsilon)f(S_k^*) \label{eqn:greedyguarantee}
\end{align}
with probability at least $1-\delta$. 
\end{theorem}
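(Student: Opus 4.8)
The plan is to transfer the exact submodular greedy guarantee, stated just above the theorem for the true objective $f$, to the realized guarantee for the \emph{estimated} objective $\hat f$ that Alg.~\ref{alg:multiim} actually maximizes. First I would observe that $\hat f$, being a nonnegative sum of truncated sample-coverage counts, is itself monotone and submodular, so the classical bound
\[
\hat f(S_j^g)\ \geq\ \bigl(1-(1-1/k)^j\bigr)\,\hat f(S_k^{*})
\]
holds verbatim, where $S_k^*$ is the true size-$k$ optimum of $f$ (the greedy guarantee is with respect to the optimum of $\hat f$, which dominates $\hat f(S_k^*)$ since $S_k^*$ is a feasible size-$k$ set). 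The entire content of the theorem then reduces to showing that, upon termination, $\hat f$ is sandwiched around $f$ tightly enough on both $S_j^g$ and $S_k^*$ for the multiplicative estimation error to be absorbed into the additive $\epsilon$ slack, and that the sandwich fails with probability at most $\delta$.

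Second I would establish the per-ground-set concentration. For each $l$, the normalized coverage $\hat I^l(S)=\tfrac{|V^l|}{N_{\mathcal{R}^l}}C_{\mathcal{R}^l}(S)$ is an average of $N_{\mathcal{R}^l}$ i.i.d.\ Bernoulli indicators, so a Chernoff bound gives two-sided multiplicative deviation bounds whose failure probability decays exponentially in the number of covered samples. The stopping rule $C_{\mathcal{R}^l}(S_k)\geq\gamma$ is precisely what converts ``enough covered samples'' into the required accuracy: with $\gamma=2(\phi^2+\log\tfrac{3L^2}{(2L-1)\delta})$ the per-set, per-side failure probability is at most $\tfrac{(2L-1)\delta}{3L^2}$, and $\phi=\tfrac{(1-1/e)\sigma+\tau}{\epsilon}$ is calibrated so that the tolerated relative deviation is exactly the fraction of $f(S_k^*)$ that the $(1-1/e)$-type greedy gap can afford. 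A union bound over the $L$ ground sets and the two sides, together with the geometric doubling schedule of $N_{\mathcal{R}^l}$ (whose checkpoints form a geometric series and hence contribute only a constant factor), keeps the total failure probability below $\delta$.

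Third I would chain the inequalities. On the good event, overestimation of the returned set and underestimation of the optimum are both controlled at level $\epsilon'$, giving
\[
f(S_j^g)\ \geq\ \frac{\hat f(S_j^g)}{1+\epsilon'}\ \geq\ \frac{1-(1-1/k)^j}{1+\epsilon'}\,\hat f(S_k^*)\ \geq\ \frac{(1-\epsilon')\bigl(1-(1-1/k)^j\bigr)}{1+\epsilon'}\,f(S_k^*),
\]
where $\epsilon'$ is the multiplicative accuracy delivered by the choice of $\gamma$ and $\phi$. It then remains to verify the elementary inequality $\tfrac{1-\epsilon'}{1+\epsilon'}\bigl(1-(1-1/k)^j\bigr)\geq \bigl(1-(1-1/k)^j\bigr)-\epsilon$, which fixes how small $\epsilon'$ must be relative to $\epsilon$ and closes the argument.

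The main obstacle will be the \emph{adaptivity} of the sampling. Both the sample count $N_{\mathcal{R}^l}$ and the evaluation set $S_k$ are random quantities produced by the very samples against which I want to apply concentration, so a Chernoff bound for a fixed set at a fixed sample size does not apply directly; moreover the accuracy on the true optimum $S_k^*$ cannot be certified by the stopping test, which only inspects the greedy set $S_k$. To resolve this I would either take a union bound uniformly over all $\binom{|V|}{k}$ candidate size-$k$ sets, absorbing the resulting $k\log|V|$ term into the $\phi^2$ part of $\gamma$, or invoke a stopping-time (martingale) argument so that the guarantee holds at the random termination round; and I would bound $\hat f(S_k^*)$ from below indirectly through the certified coverage of the greedy set rather than through a direct test on $S_k^*$.
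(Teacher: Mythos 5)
Your plan is correct and follows essentially the same route as the paper's proof: you transfer the exact greedy guarantee on the RIS estimate $\hat f$ to the true objective $f$ via concentration (one-sided for the fixed optimum $S_k^*$, union-bounded over all size-$j$ candidate sets for the adaptively chosen greedy output, with the truncation by $\eta_l|V^l|$ preserving the multiplicative bounds), which is exactly the content of the paper's Lemma~\ref{lemma:numberofsamples} built on the cited bounds of Tang et al. The adaptivity obstacle you identify and your proposed fixes---a martingale/stopping-time argument showing that the coverage test $C_{\mathcal{R}^l}(S_k)\geq\gamma$ on the greedy set certifies enough samples, which in turn certifies accuracy on $S_k^*$ indirectly---are precisely the paper's Lemma~\ref{lemma:numberofsampleguarantee}, after which both arguments close with the same union bound.
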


To prove Theorem~\ref{theorem:greedyguarantee}, we prove the following two Lemmas. In lemma~\ref{lemma:numberofsamples}, we derive the number of samples required to guarantee \eqref{eqn:greedyguarantee}. Next, we ensure that Alg.~\ref{alg:multiim} generates at least that many samples in Lemma~\ref{lemma:numberofsampleguarantee}. The validity of Theorem~\ref{theorem:greedyguarantee} is then straightforward when combining Lemma~\ref{lemma:numberofsamples} and Lemma~\ref{lemma:numberofsampleguarantee}, and applying the union bound.


\begin{lemma}\label{lemma:numberofsamples}
The number of samples required to guarantee \eqref{eqn:greedyguarantee} with probability at least $1-\frac{L+1}{3L}\delta$ is 
\begin{align}
Q=\sum_{l=1}^LQ^l\label{eqn:totalnumberofsamples}
\end{align}
where 
\begin{align*}
Q^l = \frac{2|V^l|\phi^2}{\mathbb{I}_T^l(S_k^*)\epsilon^2},\sigma =\sqrt{\ln(\frac{3L}{\delta})},\tau = \sqrt{(1-\frac{1}{e})(\ln\frac{3L\binom{|V|}{j}}{\delta})}
\end{align*}
$\delta\in(0,1)$ and $\epsilon>0$ are constants. 
\end{lemma}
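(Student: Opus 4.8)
The plan is to follow the standard Reverse Influence Sampling (RIS) argument, adapted to the capped, multi-threshold objective $f$. First I would record the unbiasedness of the estimator: for each ground set $V^l$ a random reverse-reachable set is intersected by a fixed $S$ with probability $\mathbb{I}_T^l(S)/|V^l|$, so $\hat{\mathbb{I}}_T^l(S)=\frac{|V^l|}{N^l}C_{\mathcal{R}^l}(S)$ is an unbiased estimate of $\mathbb{I}_T^l(S)$, and $\hat f(S)=\sum_l\min\{\eta_l|V^l|,\hat{\mathbb{I}}_T^l(S)\}$ is the corresponding submodular surrogate of $f$. Because Alg.~\ref{alg:multiim} runs the greedy algorithm on $\hat f$, the deterministic guarantee $\hat f(S_j^g)\ge(1-(1-1/k)^j)\hat f(S_k^*)$ holds on every realization. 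The claim \eqref{eqn:greedyguarantee} then reduces, through the chain
\begin{align*}
f(S_j^g)&\ge \hat f(S_j^g)-\big[\hat f(S_j^g)-f(S_j^g)\big]\\
&\ge (1-(1-1/k)^j)\big[f(S_k^*)-(f(S_k^*)-\hat f(S_k^*))\big]-\big[\hat f(S_j^g)-f(S_j^g)\big],
\end{align*}
to controlling two one-sided deviations: the \emph{underestimate} $f(S_k^*)-\hat f(S_k^*)$ at the fixed optimum, and the \emph{overestimate} $\hat f(S_j^g)-f(S_j^g)$ at the data-dependent greedy output.

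Next I would bound each deviation by a Chernoff inequality applied per ground set. For the optimum I use a multiplicative lower tail; since $S_k^*$ is a single fixed set, a confidence parameter $\sigma=\sqrt{\ln(3L/\delta)}$ makes each of the $L$ events fail with probability at most $\delta/(3L)$. For the greedy output the set is random and correlated with the samples, so I would union-bound the upper-tail deviation over all $\binom{|V|}{j}$ candidate size-$j$ sets; this is exactly where the $\binom{|V|}{j}$ inside $\tau=\sqrt{(1-1/e)\ln(3L\binom{|V|}{j}/\delta)}$ enters. In both bounds the cap $\min\{\eta_l|V^l|,\cdot\}$ only helps: whenever a set already exceeds its threshold in ground set $l$, the corresponding terms of $f$ and $\hat f$ both saturate at $\eta_l|V^l|$, so the capped deviation never exceeds the raw influence deviation and vanishes on the saturated side. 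Substituting $N^l=Q^l=\frac{2|V^l|\phi^2}{\mathbb{I}_T^l(S_k^*)\epsilon^2}$ into the two Chernoff expressions makes the per-ground-set deviations proportional to $\sigma$ and $\tau$, respectively; summing over $l$ and weighting the optimum side by the greedy factor $1-(1-1/k)^j$ (bounded by $1-1/e$), the definition $\phi=\frac{(1-1/e)\sigma+\tau}{\epsilon}$ is precisely what balances the two contributions so that their total stays within the target multiple of $f(S_k^*)$, yielding \eqref{eqn:greedyguarantee}.

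The probability accounting then assembles the stated bound. The optimum side contributes $L$ events, each of probability $\delta/(3L)$, for a total of $\delta/3=L\delta/(3L)$; the candidate side, with its union bound folded into $\tau$, contributes one further combined event of probability $\delta/(3L)$. Adding these gives the failure probability $\frac{L+1}{3L}\delta$, i.e.\ success at least $1-\frac{L+1}{3L}\delta$; the factor $3$ throughout ($3L$ and $3L\binom{|V|}{j}$) reserves the remaining share of $\delta$ for the separate stopping/verification event treated in Lemma~\ref{lemma:numberofsampleguarantee}.

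The hard part will be the overestimate at the greedy solution $S_j^g$. Since $S_j^g$ is chosen adaptively from the very samples used to evaluate it, a naive concentration bound is invalid; the union bound over all $\binom{|V|}{j}$ possible outputs is what restores correctness but inflates the confidence term to $\tau$, and one must additionally argue --- through the capping together with the $(1-1/e)$ weighting --- that the resulting absolute deviation is governed by the optimal influence $\mathbb{I}_T^l(S_k^*)$ appearing in $Q^l$ rather than by the possibly larger influence of $S_j^g$ itself. Getting the two one-sided error budgets to collapse into the single constant $\phi$ while keeping the per-ground-set sample sizes $Q^l$ tied to $\mathbb{I}_T^l(S_k^*)$ is the delicate accounting on which the whole lemma hinges.
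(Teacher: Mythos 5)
Your proposal is correct and follows essentially the same route as the paper: the paper likewise splits the error into a fixed-set lower tail at $S_k^*$ (its Lemma~\ref{lemma:firstthreshold}, yielding $\sigma$) and a union-bounded upper tail over all $\binom{|V|}{j}$ candidate sets (its Lemma~\ref{lemma:secondthreshold}, yielding $\tau$), balances the two budgets via $\epsilon_1$ and $\phi=\frac{(1-1/e)\sigma+\tau}{\epsilon}$, and accounts for $L+1$ failure events of probability $\delta/(3L)$ each to reach $1-\frac{L+1}{3L}\delta$. Your observation that the cap $\min\{\eta_l|V^l|,\cdot\}$ only contracts deviations is precisely what the paper establishes through its four-situation case analysis of $\hat{\mathbb{I}}_T^l(S_k^*)$ versus $\eta^l|V^l|$.
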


\begin{lemma}\label{lemma:numberofsampleguarantee}
Alg. \ref{alg:multiim} guarantees the number of samples for each threshold is at least $Q^l$ when it stops, with probability at least $1-\frac{2L-1}{3L^2}\delta$.
\end{lemma}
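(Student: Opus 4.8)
The plan is to bound the complementary ``bad'' event---that for some threshold $l$ the algorithm halts sampling for $\mathcal{R}^l$ while $N_{\mathcal{R}^l}<Q^l$---and to show its probability is at most $\frac{2L-1}{3L^2}\delta$. Since the stopping decisions for the $L$ collections are governed by the same rule, I would first reduce to a per-threshold statement: if each threshold $l$ halts prematurely with probability at most $\frac{2L-1}{3L^3}\delta$, a union bound over the $L$ thresholds yields $L\cdot\frac{2L-1}{3L^3}\delta=\frac{2L-1}{3L^2}\delta$, which is the claim. So the heart of the argument is a single-threshold tail estimate.

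Fix $l$. Sampling for $\mathcal{R}^l$ stops exactly when the current greedy solution satisfies $C_{\mathcal{R}^l}(S_k)\ge\gamma$. I would exploit the defining RIS identity: for any \emph{fixed} set $S$, a freshly drawn sample of $\mathcal{R}^l$ intersects $S$ independently with probability $\mathbb{I}_T^l(S)/|V^l|$, so $C_{\mathcal{R}^l}(S)$ is a sum of $N_{\mathcal{R}^l}$ i.i.d.\ Bernoulli variables with mean $N_{\mathcal{R}^l}\,\mathbb{I}_T^l(S)/|V^l|$. Because the true coverage of any size-$j$ candidate is bounded by that of the best size-$j$ set for threshold $l$, captured through $\mathbb{I}_T^l(S_k^*)$ in $Q^l=\tfrac{2|V^l|\phi^2}{\mathbb{I}_T^l(S_k^*)\epsilon^2}$ (from Lemma~\ref{lemma:numberofsamples}), whenever $N_{\mathcal{R}^l}<Q^l$ this mean sits strictly below the level at which $\gamma$ becomes attainable. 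An upper-tail Chernoff bound then shows that the coverage of a fixed set overshoots $\gamma$ only with probability exponentially small in $\gamma$, and the two additive pieces of $\gamma=2(\phi^2+\log\frac{3L^2}{(2L-1)\delta})$ are chosen precisely so that this tail absorbs both the confidence factor $\frac{3L^2}{(2L-1)\delta}$ and the deviation measured by $\phi$.

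The main obstacle is that $S_k$ is \emph{not} fixed: it is the greedy maximizer computed from $\mathcal{R}^l$ itself, so $C_{\mathcal{R}^l}(S_k)$ is not a clean binomial and the Chernoff bound above does not apply verbatim. I would strip out this adaptivity with two nested union bounds. First, I union-bound the fixed-set tail estimate over all $\binom{|V|}{j}$ seed sets of size $j$ that $S_k$ could possibly equal; this is exactly the origin of the $\ln\binom{|V|}{j}$ term inside $\tau$, which inflates the per-set confidence to cover the worst candidate. Second, since $N_{\mathcal{R}^l}$ is itself a random stopping time---the algorithm only ever inspects the geometric sequence $\gamma,2\gamma,4\gamma,\dots$ produced by the doubling step---I would union-bound over these $O(\log (Q^l/\gamma))$ rounds so that the coverage remains below $\gamma$ \emph{simultaneously} at every sample size short of $Q^l$. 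Conditioned on this high-probability event the check cannot fire before $N_{\mathcal{R}^l}$ reaches $Q^l$, which is the desired conclusion.

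Finally I would assemble the pieces: the per-set Chernoff bound with the $\tau$-inflated confidence, composed over the doubling rounds and then over the $L$ thresholds, produces total failure probability at most $\frac{2L-1}{3L^2}\delta$. The delicate bookkeeping---beyond the adaptivity issue above---is checking that $\gamma$ and $\phi$ are calibrated so the three union bounds (over candidate sets, over rounds, over thresholds) compose to exactly this figure rather than a looser constant, and that the bound $\mathbb{I}_T^l(S_k)\le\mathbb{I}_T^l(S_k^*)$ is invoked against the correct per-threshold optimum used in $Q^l$.
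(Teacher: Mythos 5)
Your proposal diverges from the paper's proof at its crux, and the divergence is not harmless: the extra union bounds you introduce cannot be paid for by the algorithm's fixed choice of $\gamma$. The paper's proof never union-bounds over the doubling rounds at all. Instead it uses a monotonicity/embedding trick: if sampling for threshold $l$ stops with $|\mathcal{R}^l|\leq Q^l$, then at that moment $\gamma\leq C_{\mathcal{R}^l}(S_k)=\sum_{i=1}^{|\mathcal{R}^l|}X_{S_k}^i\leq\sum_{i=1}^{Q^l}X_{S_k}^i$, because the indicators are nonnegative and the sum can only grow when extended to $Q^l$ terms. Thus the event ``stopped at \emph{some} round before $Q^l$ samples'' is contained in a \emph{single} fixed-length upper-tail event for a sum of $Q^l$ Bernoulli variables, to which the martingale bound (Lemma~\ref{lemma:martingale}, from \cite{Tang15}) is applied exactly once per threshold, yielding $\frac{2L-1}{3L^2}\delta$ directly. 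Your union bound over the $O(\log(Q^l/\gamma))$ doubling rounds is therefore an unnecessary step that costs a multiplicative $\log$ factor in the failure probability, and since $\gamma=2(\phi^2+\log\frac{3L^2}{(2L-1)\delta})$ contains no term compensating for the number of rounds, your bookkeeping cannot ``compose to exactly this figure'' as you hope; it comes out strictly looser.

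Your threshold accounting is also miscalibrated relative to the paper. The paper proves the $\frac{2L-1}{3L^2}\delta$ bound \emph{per threshold}; the union over the $L$ thresholds (together with Lemma~\ref{lemma:numberofsamples}'s $\frac{L+1}{3L}\delta$) is performed when assembling Theorem~\ref{theorem:greedyguarantee}, where the totals sum to $\delta$. Your plan instead demands a per-threshold failure of $\frac{2L-1}{3L^3}\delta$, which would require $\gamma$ to contain $\log\frac{3L^3}{(2L-1)\delta}$ rather than $\log\frac{3L^2}{(2L-1)\delta}$ --- again a constant the algorithm does not provide. One point in your favor: your concern about adaptivity of $S_k$ (it is computed from $\mathcal{R}^l$ itself, so $C_{\mathcal{R}^l}(S_k)$ is not a clean binomial) is legitimate, and the paper's proof simply applies the martingale bound to the adaptively chosen $S_k$ without the union bound over $\binom{|V|}{j}$ candidate sets that a fully rigorous treatment would seem to need; but your proposed fix is likewise not financed by the stated $\gamma$ (the $\ln\binom{|V|}{j}$ term sits inside $\phi^2$ scaled by $(1-1/e)/\epsilon^2$, not as the additive term the union bound requires), so it does not rescue the exact constant either. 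The missing idea, concretely, is the containment $\{|\mathcal{R}^l|\leq Q^l\}\subseteq\{\sum_{i=1}^{Q^l}X_{S_k}^i\geq\gamma\}$, which collapses the stopping-time structure in one line.
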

For small values of $k$, $\mathbb{I}^l_T(S_k^*)$ can be $0$ for some $l$. In such cases, the above lemmas hold trivially and the number of required samples is defined as $0$. 

With Theorem~\ref{theorem:greedyguarantee}, we are able to derive the approximation ratio of MMinSeed in Theorem~\ref{theorem:greedybicriteriaratio} and eventually, the approximation ratio of FAST in Theorem~\ref{theorem:losratio}.


\begin{theorem}\label{theorem:greedybicriteriaratio}
MMinSeed has approximation ratio $\log |V|$ and achieves at least $(1-\epsilon)$ of the required $f(.)$ value, given \eqref{eqn:greedyguarantee}.
\end{theorem}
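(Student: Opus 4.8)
I would view MTAP through the lens of \emph{submodular set cover}: by \eqref{eqn:defncombined} the objective $f$ is monotone and submodular with maximum value $f(S^*)=\sum_{l}\eta_l|V^l|$, attained exactly when every threshold is met. Let $k^*$ denote the size of an optimal MTAP solution, so that some set of $k^*$ seeds already achieves $f=f(S^*)$. The theorem is then a bi-criteria statement whose two halves I would prove separately: the coverage guarantee ($f\ge(1-\epsilon)f(S^*)$) follows from the stopping rule of Alg.~\ref{alg:mminseed}, while the size bound ($|S|\le\log|V|\cdot k^*$) follows from the classical greedy-covering analysis, with the logarithm appearing because $f(S^*)=\sum_l\eta_l|V^l|=O(|V|)$ when $L$ is constant.

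\textbf{Coverage.} First I would argue that the while loop of Alg.~\ref{alg:mminseed} terminates and certifies coverage. The loop exits only once $\hat f(S)\ge(1-\epsilon)f(S^*)$, and since letting $j$ grow to $|V|$ forces full coverage, termination is guaranteed. To pass from the estimated $\hat f$ to the true $f$, I would invoke Theorem~\ref{theorem:greedyguarantee}, which already folds all sampling error into a single $\epsilon$ term holding with probability $1-\delta$ per invocation of Alg.~\ref{alg:multiim}; a union bound over the (at most $|V|$) invocations then yields $f(S)\ge(1-\epsilon)f(S^*)$ with the stated overall probability.

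\textbf{Size bound.} For the ratio I would instantiate Theorem~\ref{theorem:greedyguarantee} with budget $k=k^*$, so that $f(S_{k^*}^*)=f(S^*)$ and \eqref{eqn:greedyguarantee} reads
\begin{align*}
f(S_j^g)\ge\Bigl(1-(1-1/k^*)^j-\epsilon\Bigr)f(S^*).
\end{align*}
Hence the residual gap $\Phi_j=f(S^*)-f(S_j^g)$ shrinks geometrically, $\Phi_j\le\bigl((1-1/k^*)^j+\epsilon\bigr)f(S^*)$. Choosing $j=O(k^*\ln f(S^*))$ drives the multiplicative term $(1-1/k^*)^jf(S^*)$ below the target residual, and since $f(S^*)=O(|V|)$ this is $O(k^*\log|V|)$ greedy steps, already sufficient to reach the $(1-\epsilon)f(S^*)$ target. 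Finally I would verify that the average-gain increment of Alg.~\ref{alg:mminseed} never selects more seeds than this: submodularity makes the per-seed gain nonincreasing, so the update $j\leftarrow j+\lceil((1-\epsilon)f(S^*)-\hat f(S))/\bar g\rceil$ (with $\bar g$ the average gain of the previous round) underestimates the true requirement and never overshoots the covering bound, giving approximation ratio $\log|V|$.

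\textbf{Main obstacle.} The delicate part is not either bound in isolation but their interaction. I must apply the \emph{maximization}-style guarantee \eqref{eqn:greedyguarantee} in a regime where the number of greedy steps $j$ exceeds the budget $k=k^*$, and ensure the additive $\epsilon f(S^*)$ estimation slack does not re-accumulate across the $O(k^*\log|V|)$ steps; this is exactly where I would lean on Theorem~\ref{theorem:greedyguarantee} having absorbed the sampling error into one $\epsilon$ that holds simultaneously via Lemmas~\ref{lemma:numberofsamples}--\ref{lemma:numberofsampleguarantee}, rather than incurring fresh error at each iteration. Reconciling the two roles of $\epsilon$ (estimation slack versus coverage relaxation), pinning down the target residual that fixes the exact logarithmic factor, and confirming $f(S^*)=O(|V|)$ to extract $\log|V|$ are the remaining technical points.
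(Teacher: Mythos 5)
Your proposal takes essentially the same route as the paper's proof: both instantiate \eqref{eqn:greedyguarantee} with the optimal MTAP solution size as the budget $k$, use the geometric decay of $(1-1/k^*)^j$ (the paper phrases this via $\log(1-1/k^*)<-1/k^*$) to bound the number of greedy steps by roughly $k^*\log f(S^*)=O(k^*\log|V|)$, and let the additive $\epsilon$ slack deliver the $(1-\epsilon)$ coverage guarantee. The extra algorithmic checks you add (loop termination, the average-gain increment not overshooting) go beyond what the paper writes down but do not alter the core argument, and your residual bookkeeping (driving $(1-1/k^*)^jf(S^*)$ below a unit rather than to zero) has the same mild looseness as the paper's choice $c=\epsilon f(S^*)$.
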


\begin{theorem}\label{theorem:losratio}
FAST has approximation ratio of $2\log|V|$.
\end{theorem}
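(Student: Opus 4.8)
The plan is to obtain Theorem~\ref{theorem:losratio} by composing the two guarantees already in hand: the Lipschitz-optimization bound of Theorem~\ref{theorem:lipschitzratio}, which controls how well FAST locates a good speed-up time, and the per-instance bound of Theorem~\ref{theorem:greedybicriteriaratio}, which controls how well MMinSeed solves each MTAP subproblem. Since $H(t^*)=\min_{t}H(t)$ is exactly the minimum seed count over all admissible speed-up times, it equals the TAP-DIP optimum, so it suffices to show that the seed set FAST returns has size at most $2\log|V|\cdot H(t^*)$ up to lower-order additive terms.

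First I would fix the chain of inequalities at the level of the idealized (exactly evaluated) functions. By the pointwise relaxation bound $H(t)\le H'(t)\le 2H(t)$ together with Lemma~\ref{lemma:ddipsinglebound}, the global minimum of the relaxed function satisfies $H'(t^{*'})\le 2H(t^*)$. Feeding the lower-bound machinery of Lemmas~\ref{lemma:locallipschitzconstant} and~\ref{lemma:locallowerbound} into the FAST iteration then yields, as recorded in Theorem~\ref{theorem:lipschitzratio}, a returned time $\bar t$ with $H(\bar t)\le 2H(t^*)+1$, where the factor $2$ comes from the relaxation and the additive $1$ from the finite discretization tolerance $1/l_i$ in the stopping rule. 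Second, I would layer on the MMinSeed error: the set actually produced at $\bar t$ is computed by Alg.~\ref{alg:mminseed}, so by Theorem~\ref{theorem:greedybicriteriaratio} its cardinality is at most $\log|V|$ times the corresponding exact MTAP optimum. Composing, the output size is at most $\log|V|\bigl(2H(t^*)+1\bigr)=2\log|V|\cdot H(t^*)+\log|V|$, and absorbing the additive $\log|V|$ into the ratio gives the claimed approximation ratio $2\log|V|$.

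The hard part will be reconciling the fact that FAST never sees the exact functions $H,H'$: every value used to choose $\bar t$ is itself a $\log|V|$-approximate MMinSeed output. I would argue that this does not break the factor-$2$ guarantee because the approximation is one-sided and consistent, each evaluated value lying between the exact optimum and $\log|V|$ times it, so running the Lipschitz search on the approximate relaxed function $\hat H'$ still returns a near-minimizer of $\hat H'$, and $\min_t\hat H'(t)\le \log|V|\cdot\min_t H'(t)\le 2\log|V|\cdot H(t^*)$. Two points need care. First, Lemma~\ref{lemma:locallipschitzconstant} relies on the monotonicity of $S^*_s$ (decreasing) and $S^*_a$ (increasing), which I must verify survives replacement by the MMinSeed estimates so that the computed local constants $l_j$ remain valid Lipschitz constants for $\hat H'$. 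Second, the final refinement step, which reports $H(\bar t)$ rather than $H'(\bar t)$, requires $\hat H(\bar t)\le\hat H'(\bar t)$, i.e. that solving the two thresholds jointly is never worse than solving them separately and adding, which follows from feasibility of the separated solution for the joint MTAP instance. Once these are checked, the multiplicative composition of the factor $2$ with the factor $\log|V|$, the additive terms absorbed, delivers the bound.
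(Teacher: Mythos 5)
Your proposal is correct and follows essentially the same route as the paper: the paper's proof of Theorem~\ref{theorem:losratio} is precisely the composition of Lemma~\ref{lemma:ddipsinglebound}, Theorem~\ref{theorem:lipschitzratio}, and Theorem~\ref{theorem:greedybicriteriaratio}, which is the chain you construct, including the absorption of the additive $\log|V|$ term. The only difference is that you additionally flag and discuss the subtlety that FAST evaluates $H$ and $H'$ only approximately via MMinSeed; the paper's one-line proof glosses over this, so your extra care is a refinement of, not a departure from, the paper's argument.
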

\begin{proof}
The result follows directly by combining Lemma \ref{lemma:ddipsinglebound}, Theorem \ref{theorem:lipschitzratio} and Theorem \ref{theorem:greedybicriteriaratio}.
\end{proof}

\section{Experiments}\label{sc:exp}
\subsection{Experimental Settings.}
The experiments are conducted on a Linux machine with 2.3GHz Xeon 18 core processor and 256GB of RAM. We carry experiments under Continuous IC models on the following datasets from \cite{snap}. \footnote{The source code is available at \url{https://github.com/tianyipan0411/DIP}}

\setlength\tabcolsep{2pt}
\begin{table}[!ht]
	\caption{Datasets' Statistics}
	\label{tab:data_sum}
	\centering
	\begin{tabular}{ l  r  r  r r }\toprule
		\textbf{Dataset} & \bf \#Nodes& \bf \#Edges&\bf T-Node &\bf A-Node \\\midrule
		Facebook & 4K & 88K & 100 - 500 & 1K - 2K\\
        wiki-Vote & 7K & 206K & 100 - 500 & 1K - 2K\\
        Epinions & 76K & 1M & 500 - 2.5K & 10K - 20K\\
        Slashdot & 77K & 1.8M & 500 - 2.5K & 10K - 20K \\
        Twitter & 81K & 3.54M & 500 - 2.5K & 10K - 20K\\
		Gplus& 108K & 26M & 500 - 2.5K & 10K - 20K\\
        Pokec & 1.63M & 61.2M & 5K-20K & 100K-200K\\
		LiveJournal & 4.85M & 138M & 5K-20K & 100K-200K\\
		\bottomrule
		\hline
	\end{tabular}
\end{table}

\noindent\textbf{Datasets.}
We select a set of 8 OSN datasets of various sizes to fully test the impact of the dynamic influence propagation model. The description summary of those datasets is shown in Table \ref{tab:data_sum}.

\noindent\textbf{Parameter Settings.} We follow the papers \cite{Tang14,Goyal11b} for setting propagation probability $p_{uv}$, which is calculated as $p_{uv} = \frac{1}{d^{in}_v}$ where $d^{in}_v$ denotes the indegree of node $v$. We model the propagation rate using Weibull distribution as \cite{du2013scalable,Tang15} and fix the shape parameter at $4$, scale parameter at $1$ throughout the experiments.

In all the experiments, we keep $\epsilon = 0.1$ and $\delta = 1/n$ if the values are not stated otherwise. The time limit is set at 10. The values of $r$, the propagation rate change, varies from $1.5$ to $4.0$. The $\phi,\eta$ values are not set explicitly, instead, we set the number of nodes required for being trending (T-Node) and for overall activation requirement(A-Node), based on the size of the networks. The parameters are summarized in Table \ref{tab:data_sum}. 

\subsection{Performance of FAST.}

Since the TAP-DIP problem is new, there are no suitable algorithms that can be compared directly with FAST. Instead, we demonstrate the performance of FAST via its subroutine. We compare the MMinSeed algorithm with its variation that uses the IMM algorithm \cite{Tang15} instead of the Multi-IM algorithm (Alg.~\ref{alg:multiim}) as a subroutine. In the comparison, we only allow one threshold and modify IMM's sampling method to allow the Continuous-IC model. The comparison is based on the scenario with no rate change and the lowest activation threshold for each network. 

Figure \ref{fig:time_compare} proves a clear difference of the running time.The Multi-IM supported MMinSeed is much faster (the running time is in log scale) than the one supported by IMM. The main reason for the superior performance of Multi-IM is that it decides the sample requirement dynamically, while IMM has a parameter estimation stage to estimate the number of required samples, which can be inaccurate and results in a sample collection that is much larger than necessary when the seed set size is small. In the running time comparison, we set $\epsilon = 0.5$ to allow IMM finish in reasonable time. Each number is the average of 10 runs, as the variation of running time is small and the difference is apparent.
\begin{figure}
\centering
\includegraphics[width=0.75\linewidth]{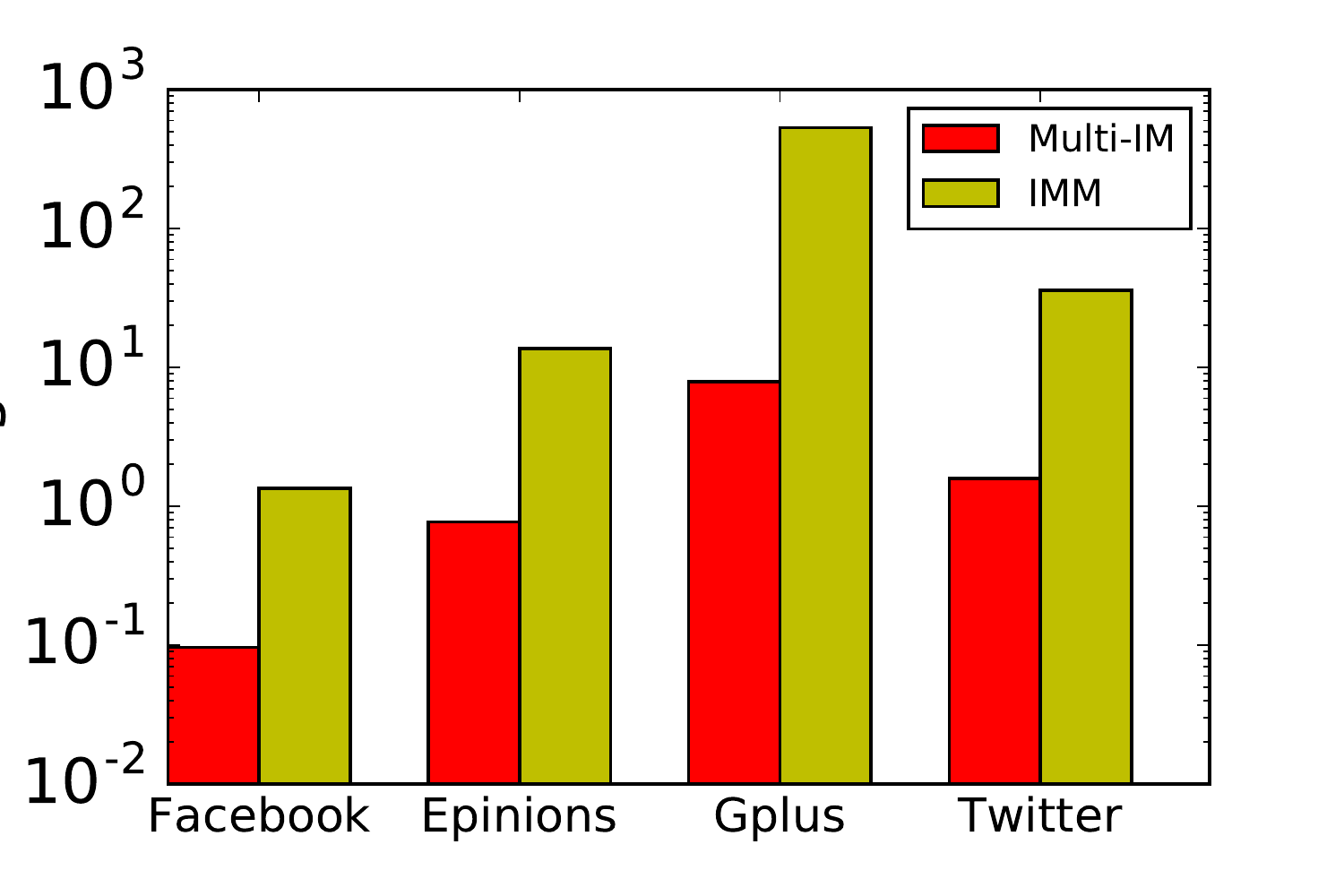}
 	\caption{Efficiency of MMinSeed with Multi-IM/IMM}
 	\label{fig:time_compare}
\end{figure}

\begin{figure}

\centering
 	\includegraphics[width=0.75\linewidth]{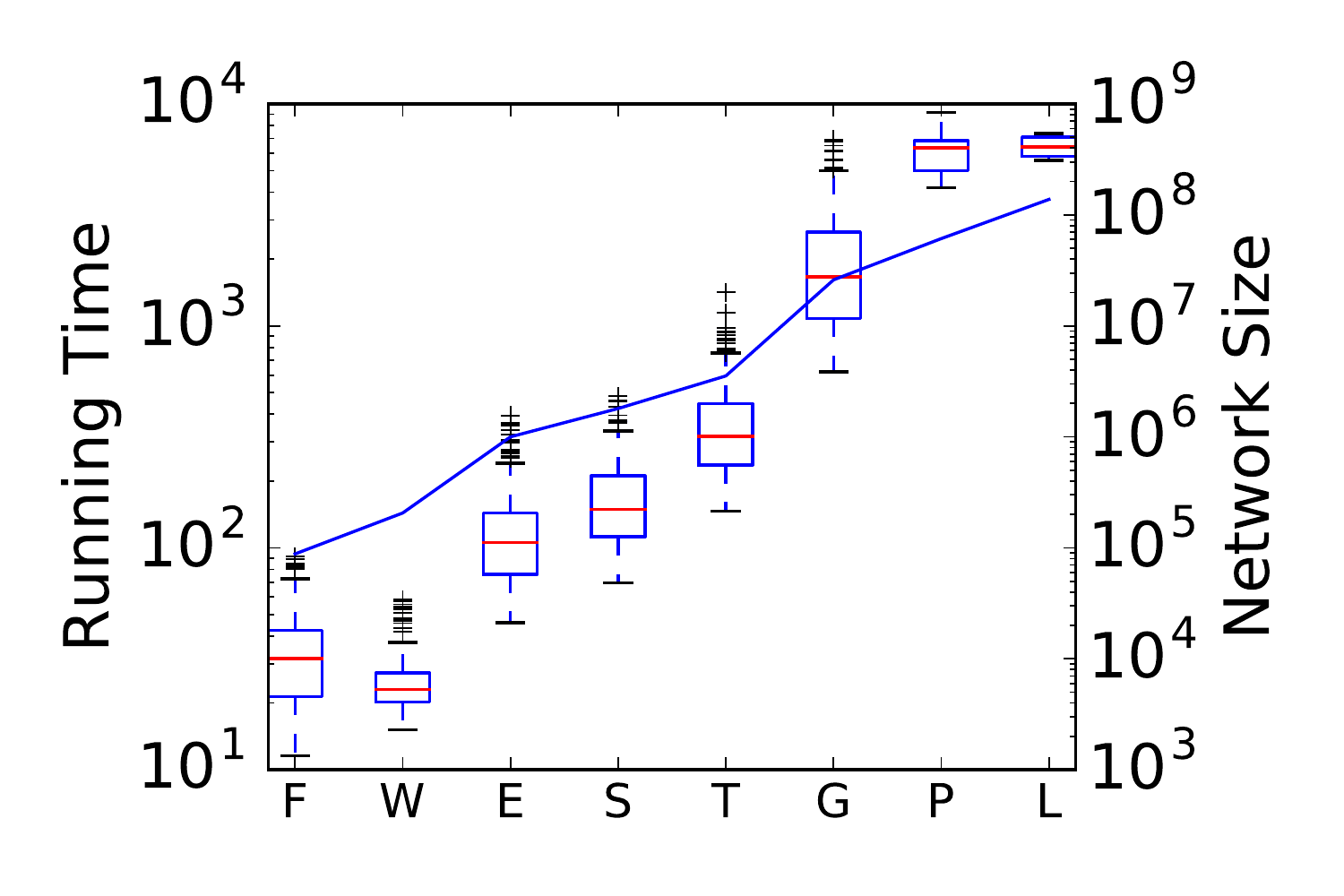}
 	\caption{Running Time of FAST}
 	\label{fig:runningtime}
\end{figure}

\textbf{Scalability of FAST.} During our experiments, we observe that all the scenarios had less than $20$ iterations inside FAST, which means the time complexity of FAST is larger than that of MMinSeed only by a multiplicative constant ($< 40$). Figure \ref{fig:runningtime} demonstrates the scalability of FAST. The trend line denotes the size of different networks (number of edges) and the box plots displays the running time in different networks with various settings. Notice that we use the first letter of each dataset due to space issues.  We can observe a nice feature of FAST that its running time grows linearly in terms of network size. For large networks such as Pokec (61.2M edges) and LiveJournal (138M edges), FAST can finish within three hours.

\subsection{Quality of the Seed Sets.}
Due to space limit, we only display the results for half of the datasets in this section, the results for the remaining are similar. 
Figure \ref{fig:seeddistribution} compares the seed sets obtained by FAST and those obtained by solving the base case of TAP without considering the rate increase. In the pie charts, the shared seeds means the proportion of seed nodes that exists in both FAST seeds and base seeds. The data in each chart is averaged among the result from 180 runs of various settings. Clearly, considering the rate increase explicitly result in a large reduction in number of seeds required.  

\begin{figure*}[!ht]

	\subfloat[Facebook]{
		\includegraphics[width=0.24\linewidth]{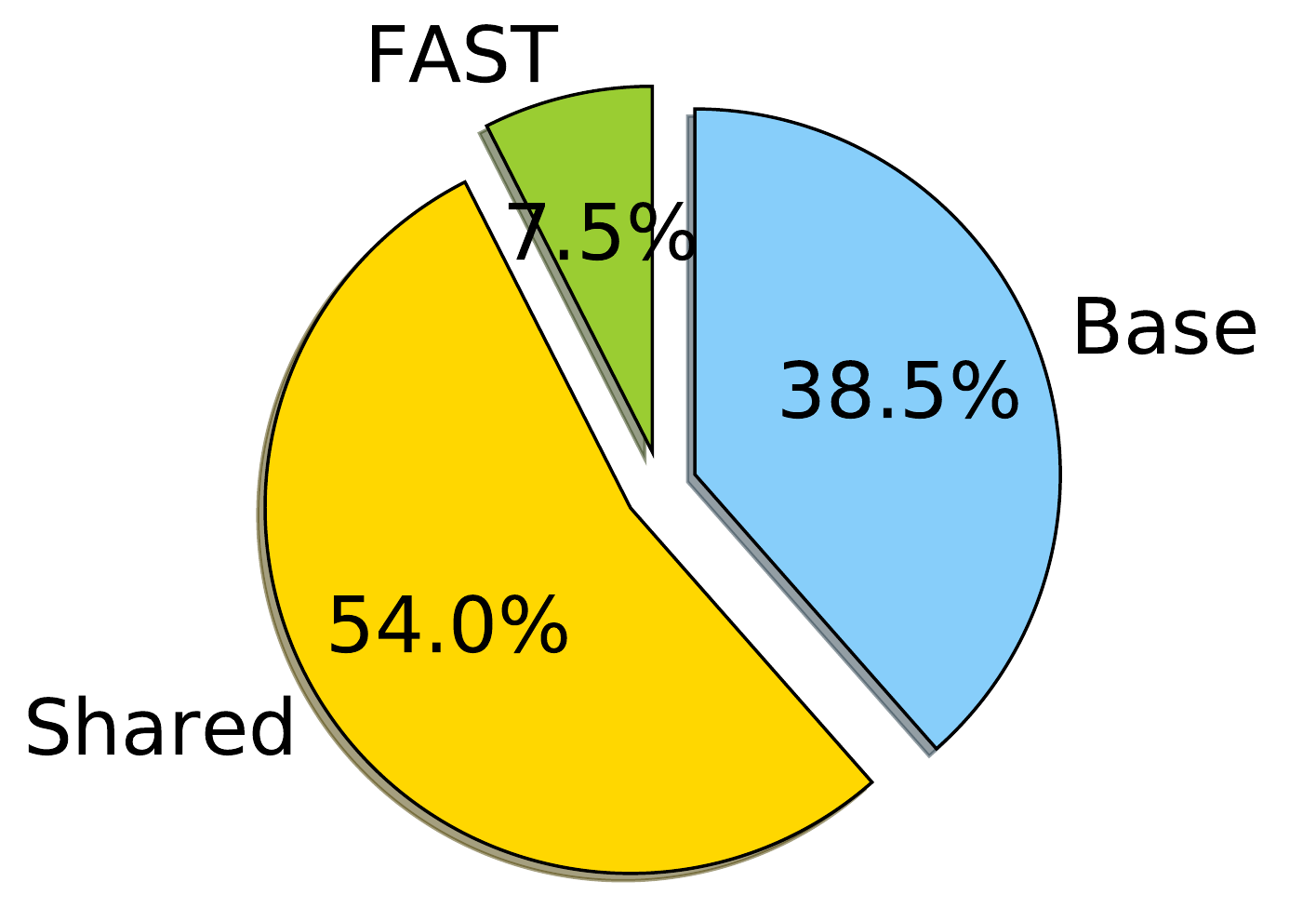}
	}
	\subfloat[Slashdot]{
		\includegraphics[width=0.24\linewidth]{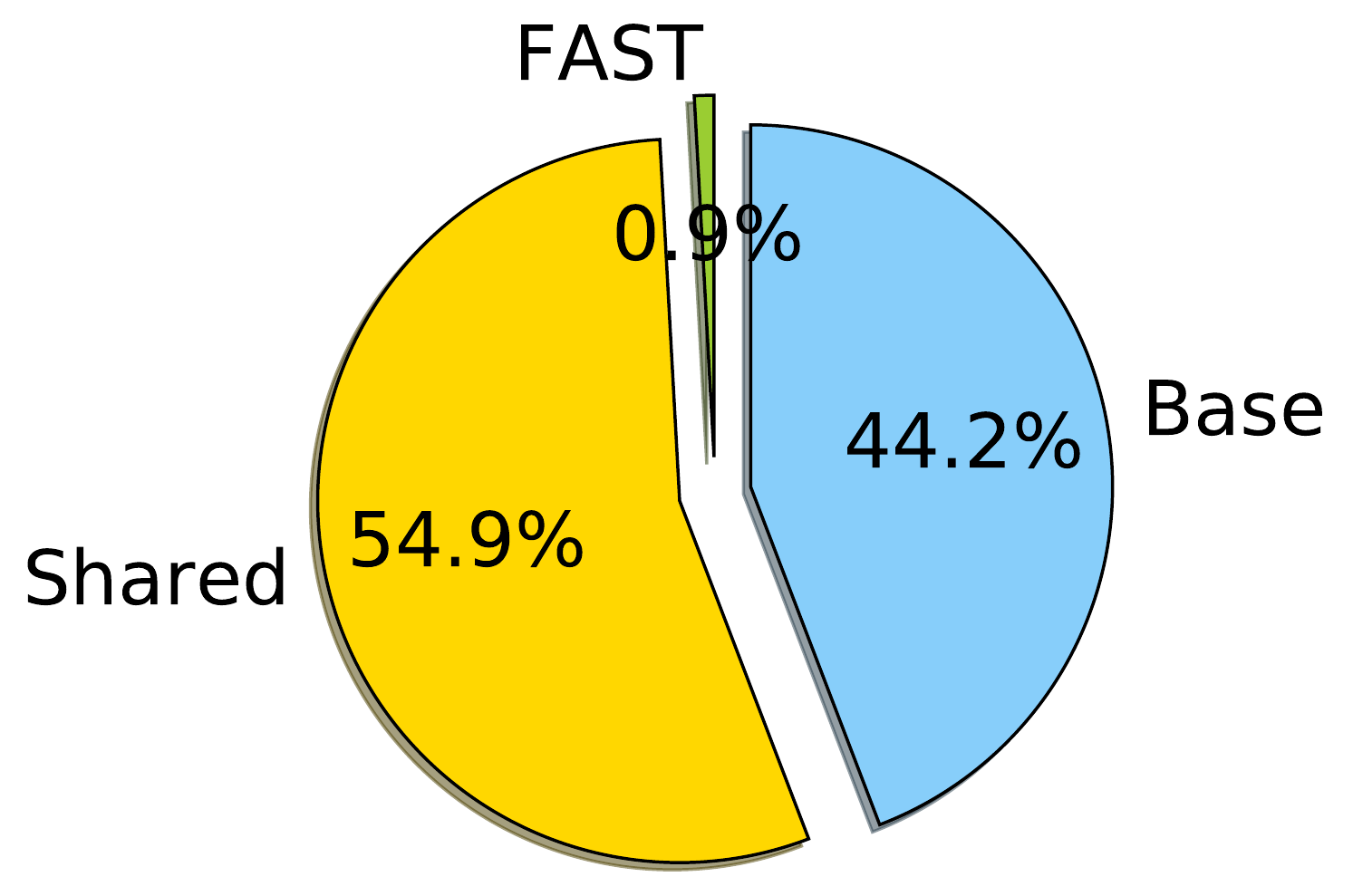}
	}
    \subfloat[Gplus]{
		\includegraphics[width=0.24\linewidth]{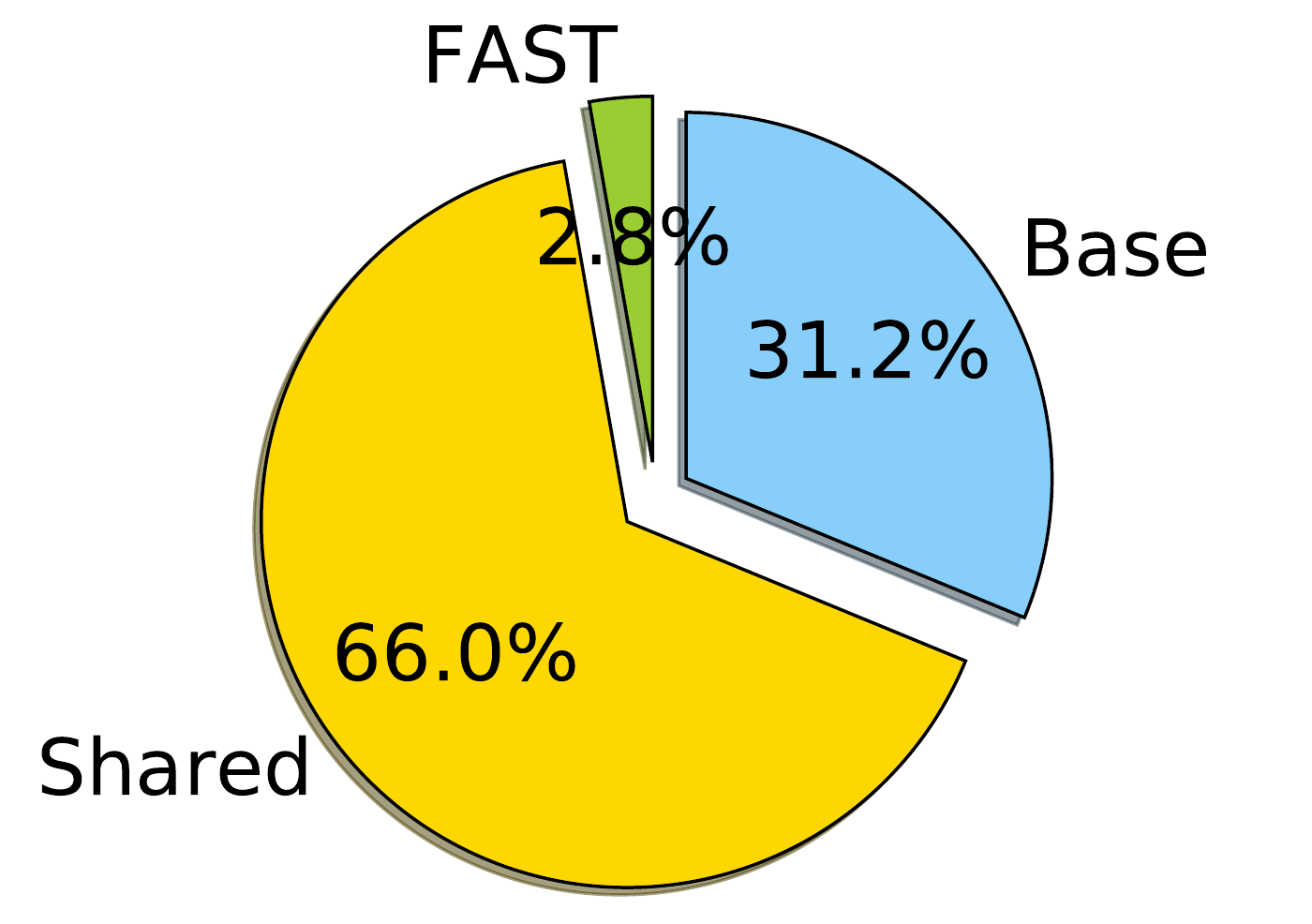}
	}
	\subfloat[Livejournal]{
		\includegraphics[width=0.24\linewidth]{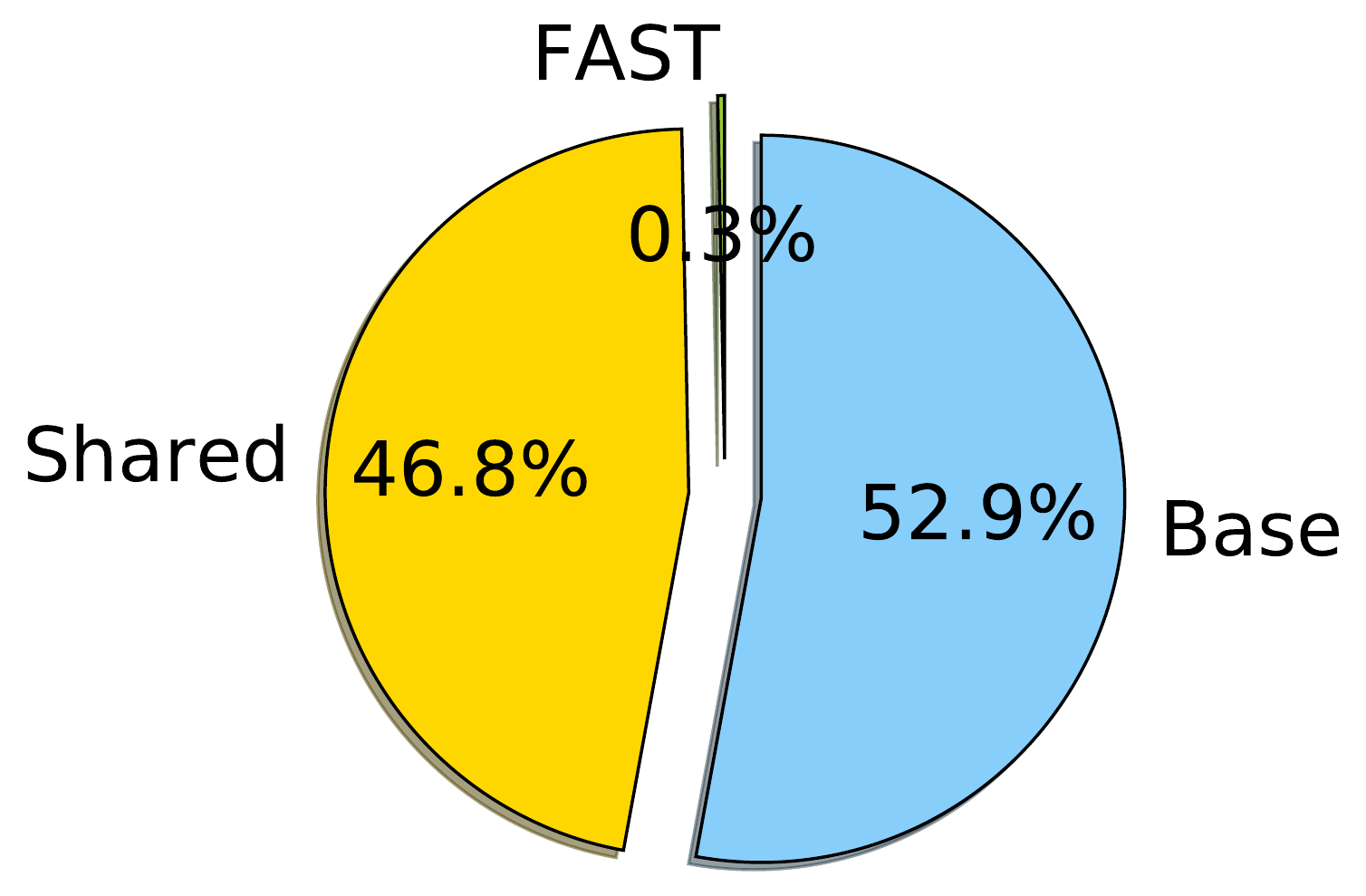}
        }
	\caption{Seed Set Distribution}
	\label{fig:seeddistribution}
\end{figure*}

\begin{figure*}[!ht]

	\subfloat[Facebook]{
		\includegraphics[width=0.24\linewidth]{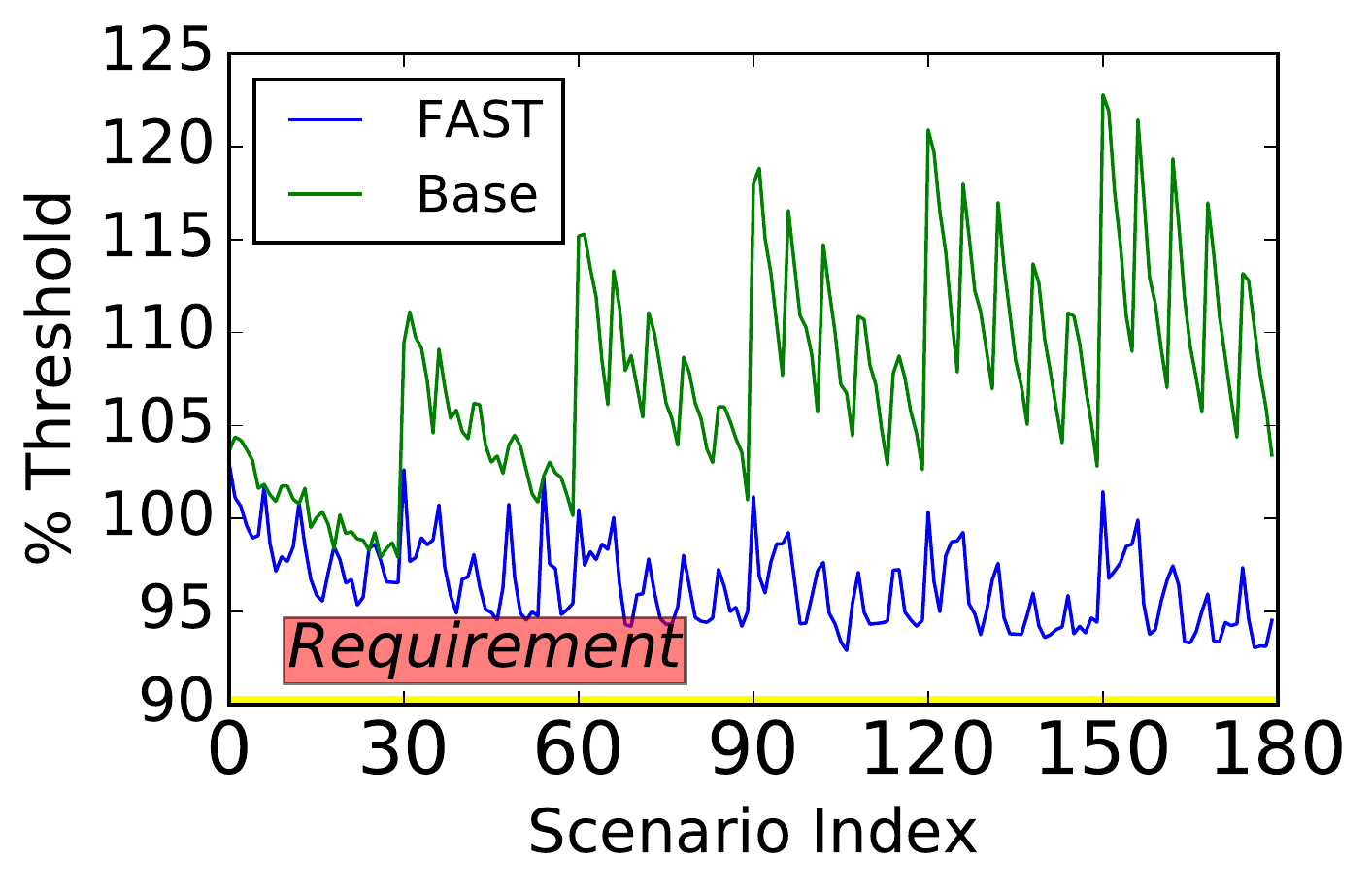}
	}
	\subfloat[Slashdot]{
		\includegraphics[width=0.24\linewidth]{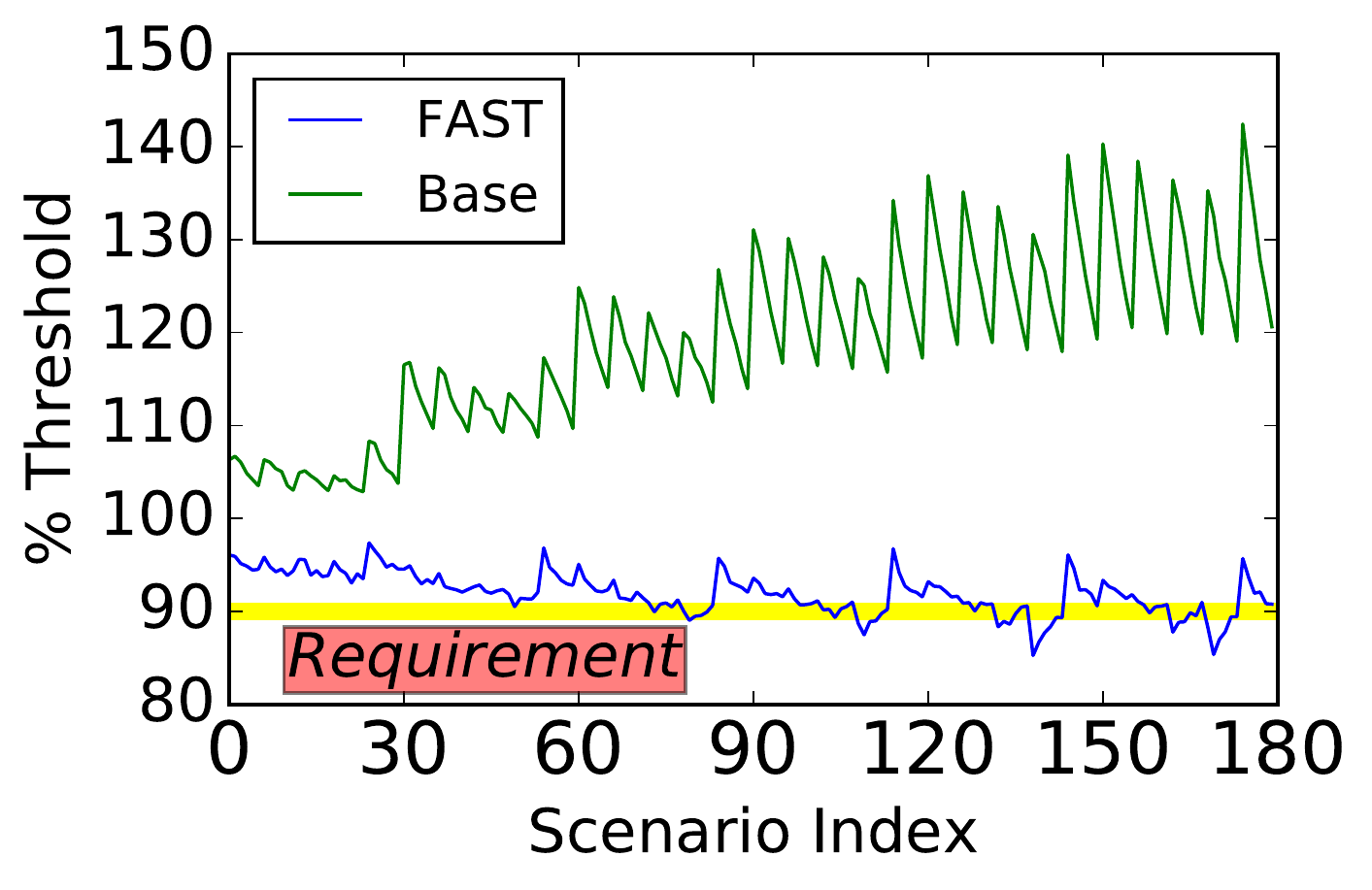}
	}
	\subfloat[Gplus]{
		\includegraphics[width=0.24\linewidth]{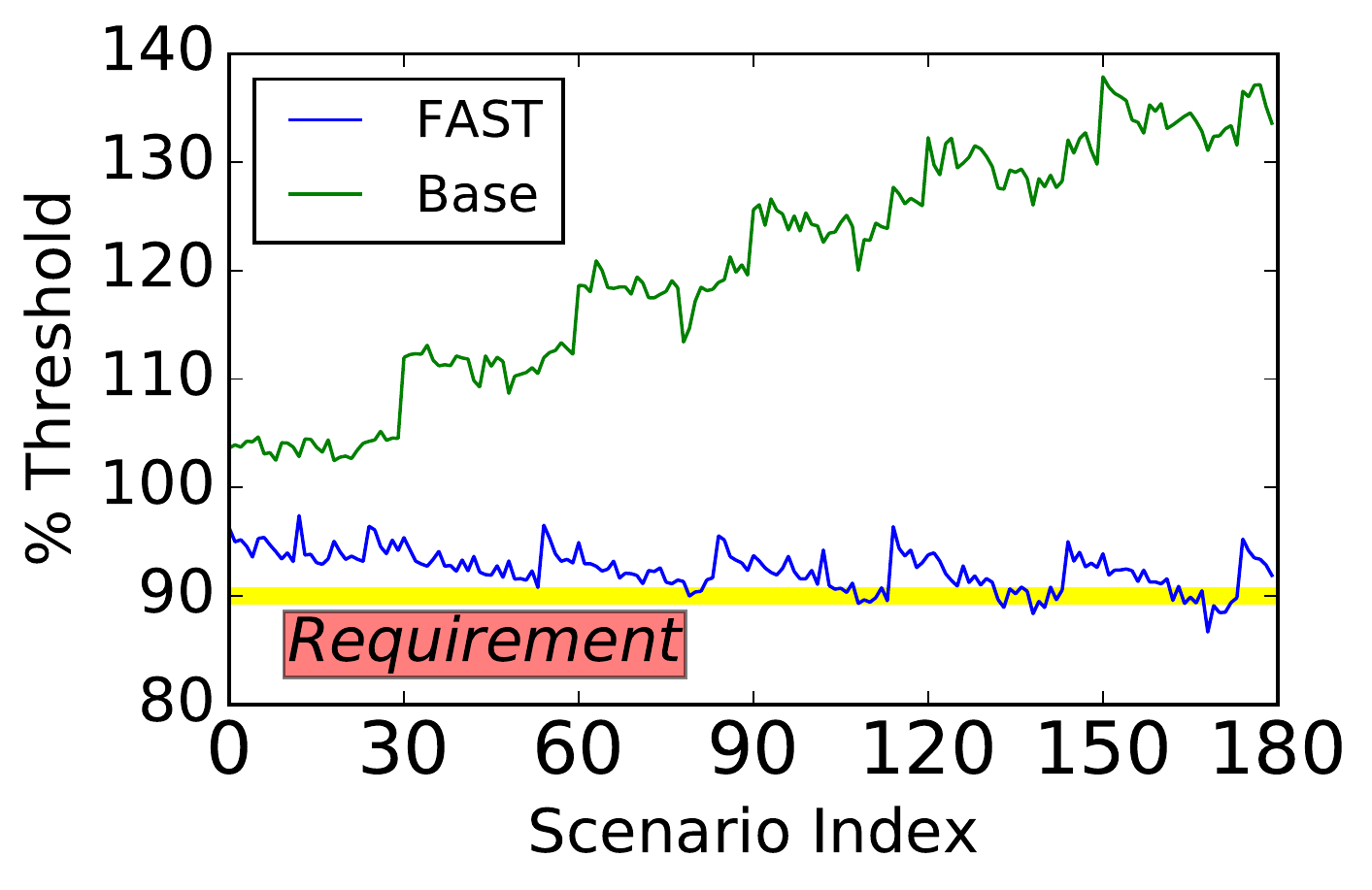}
	}
    \subfloat[LiveJournal]{
		\includegraphics[width=0.24\linewidth]{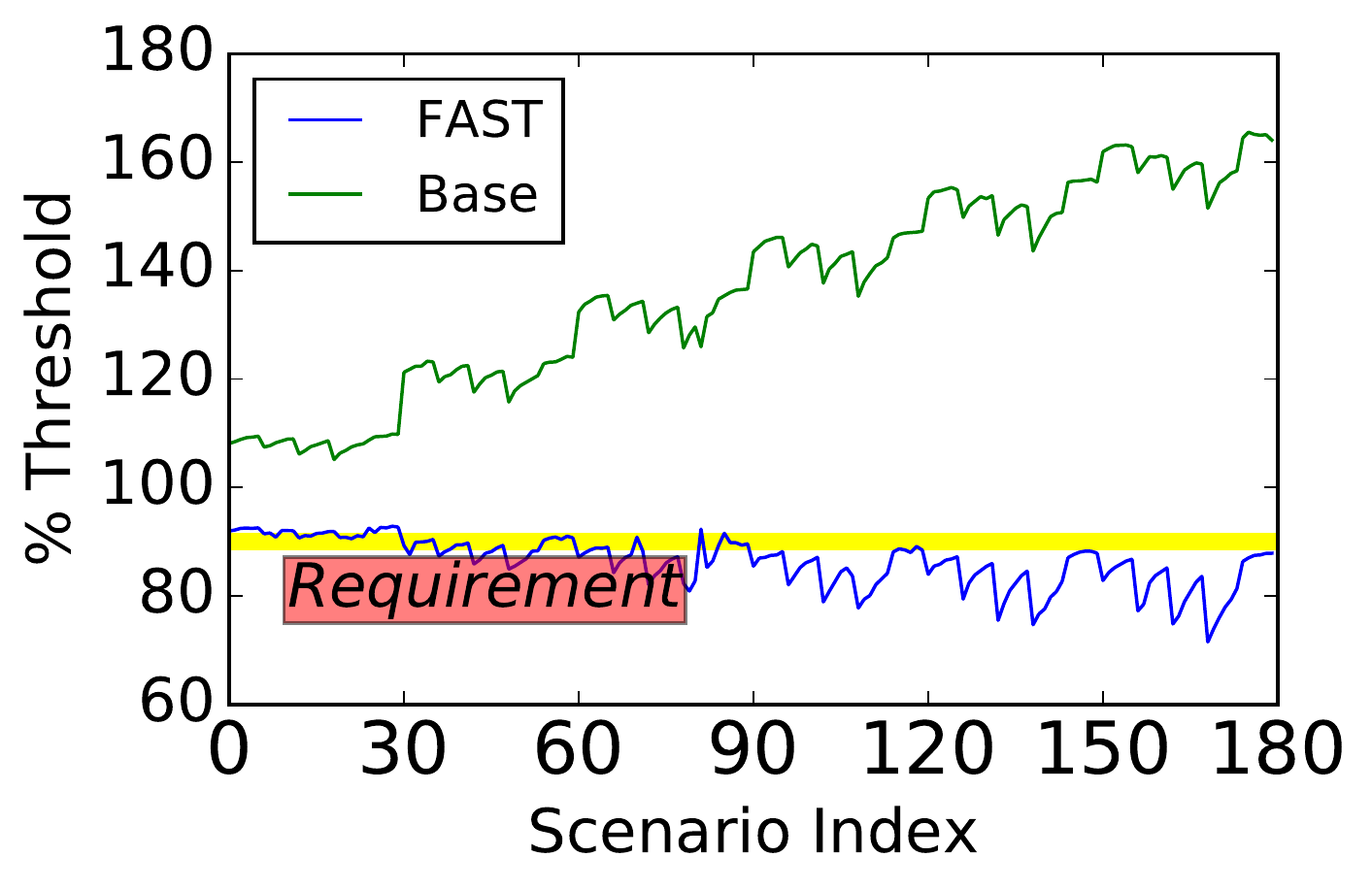}
	}
	\caption{Percentage of Nodes Activated}
	\label{fig:coverage}

\end{figure*}

As the FAST seed set is mostly a subset of the base seed set, we would expect that the FAST seed set has a smaller influence spread. However, we will demonstrate that the FAST seeds can still meet the threshold with extensive simulations. For each setting in each dataset, we simulate $10,000$ random propagation process from both the FAST seeds and the base seeds. The rate increase is applied when the \textit{actual} number of influenced nodes hit the trending thresholds. Figure \ref{fig:coverage} depicts the percentage of nodes activated comparing with the corresponding threshold. As we set $\epsilon$ at $0.1$, the reference line (in yellow) is drawn at $90\%$. A point over the reference line means the average number of activated nodes (over 10,000 simulations) in a certain scenario meets $90\%$ of the threshold. In all the four datasets, the base seed set activated much more nodes than required by the threshold, which suggests that their size can actually be reduced. In most cases, the FAST seed sets can meet the requirement. One exception is in the LiveJournal data set, it is likely that the influence spread in LiveJournal is not concentrated at the expectation.
\subsection{Sensitivity Analysis of Key Parameters}
In this section, we test the impact of three parameters: $r$ (the propagation rate after trending), T-Node (\# nodes required to be trending) and A-Node (overall activation requirement) to the behavior of FAST. The results are displayed in heat maps, a warmer color means an earlier time for rate increase (decided by FAST). The data sets are grouped based on parameter setting (refer to Table \ref{tab:data_sum} for details) into small (Facebook, WikiVote), medium (Gplus, Twitter, Epinions, Slashdot) and large (LiveJournal, Pokec). In each heat map, we vary two parameters and take average over the other. A warmer color in a cell denotes a earlier time for rate increase. 

\begin{figure*}[!ht]
    \subfloat[Small, $r$ and T-Node]{
		\includegraphics[width=0.32\linewidth, height = 35mm]{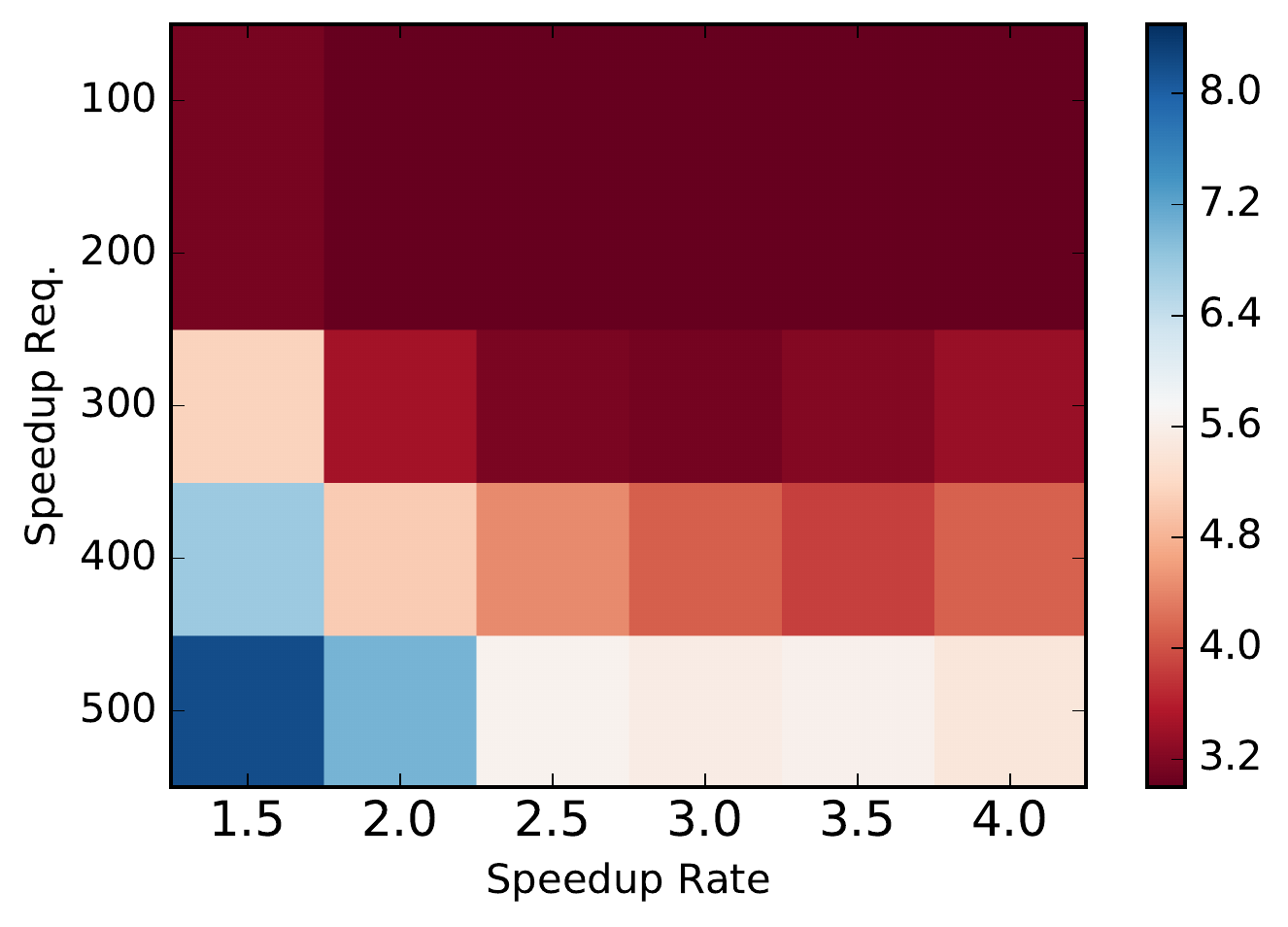}
        \label{fig:ssr1}
	}
	\subfloat[Medium, $r$ and T-Node]{
		\includegraphics[width=0.32\linewidth, height = 35mm]{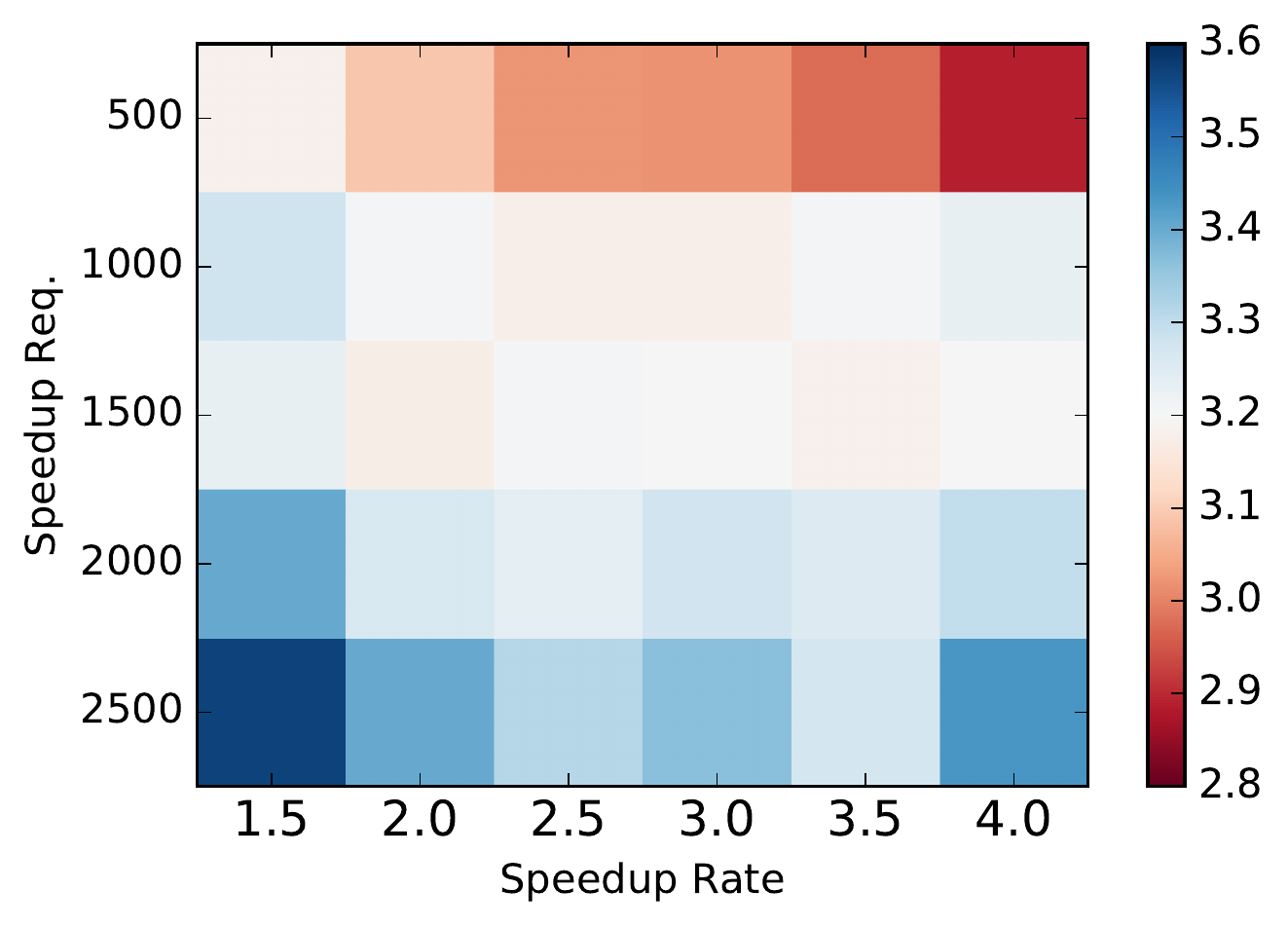}
        \label{fig:ssr2}
	}
    \subfloat[Large, $r$ and T-Node]{
		\includegraphics[width=0.32\linewidth, height = 35mm]{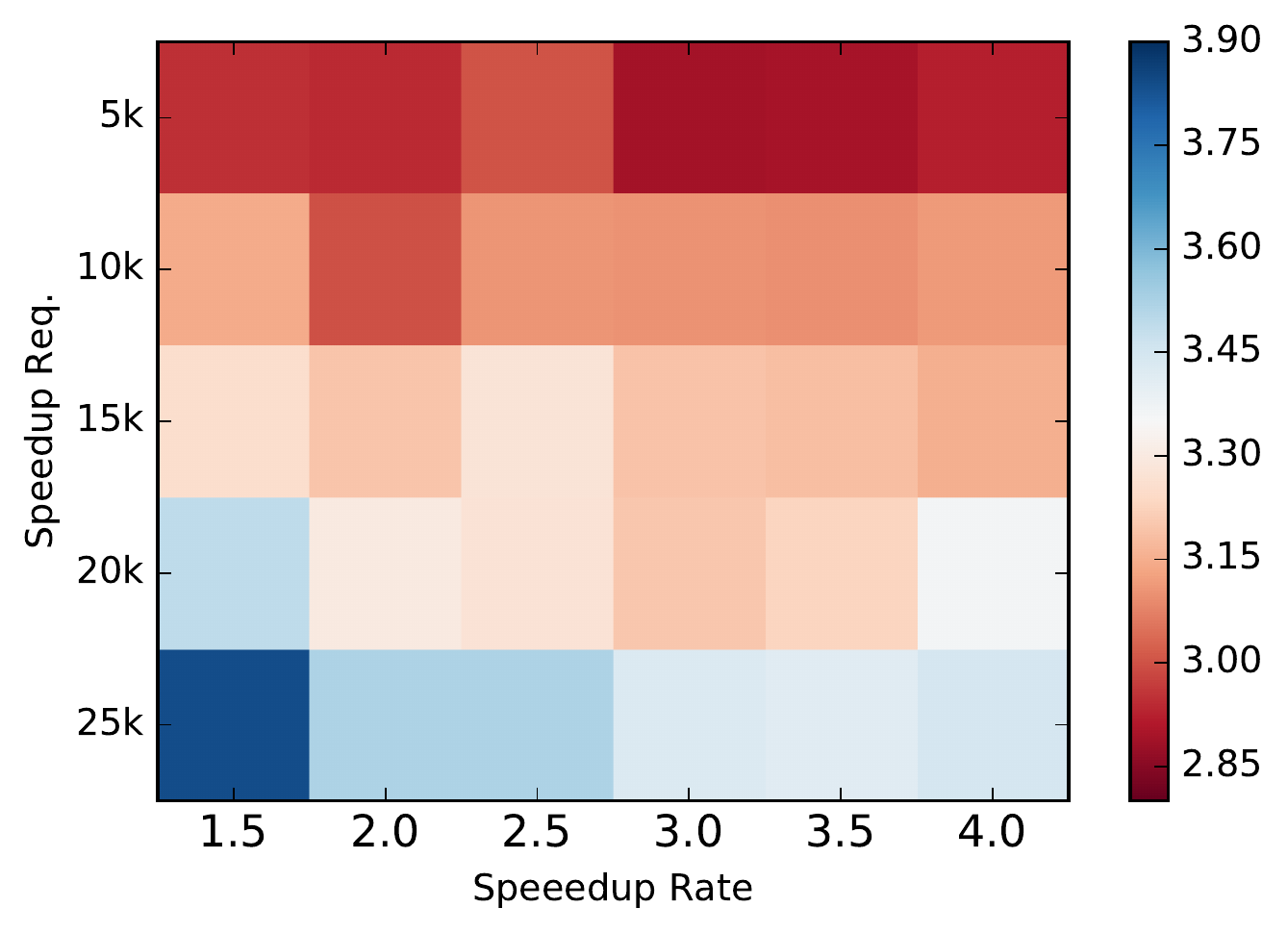}
        \label{fig:ssr3}
	}\\
    \subfloat[Small, T-Node and A-Node]{
		\includegraphics[width=0.32\linewidth, height = 35mm]{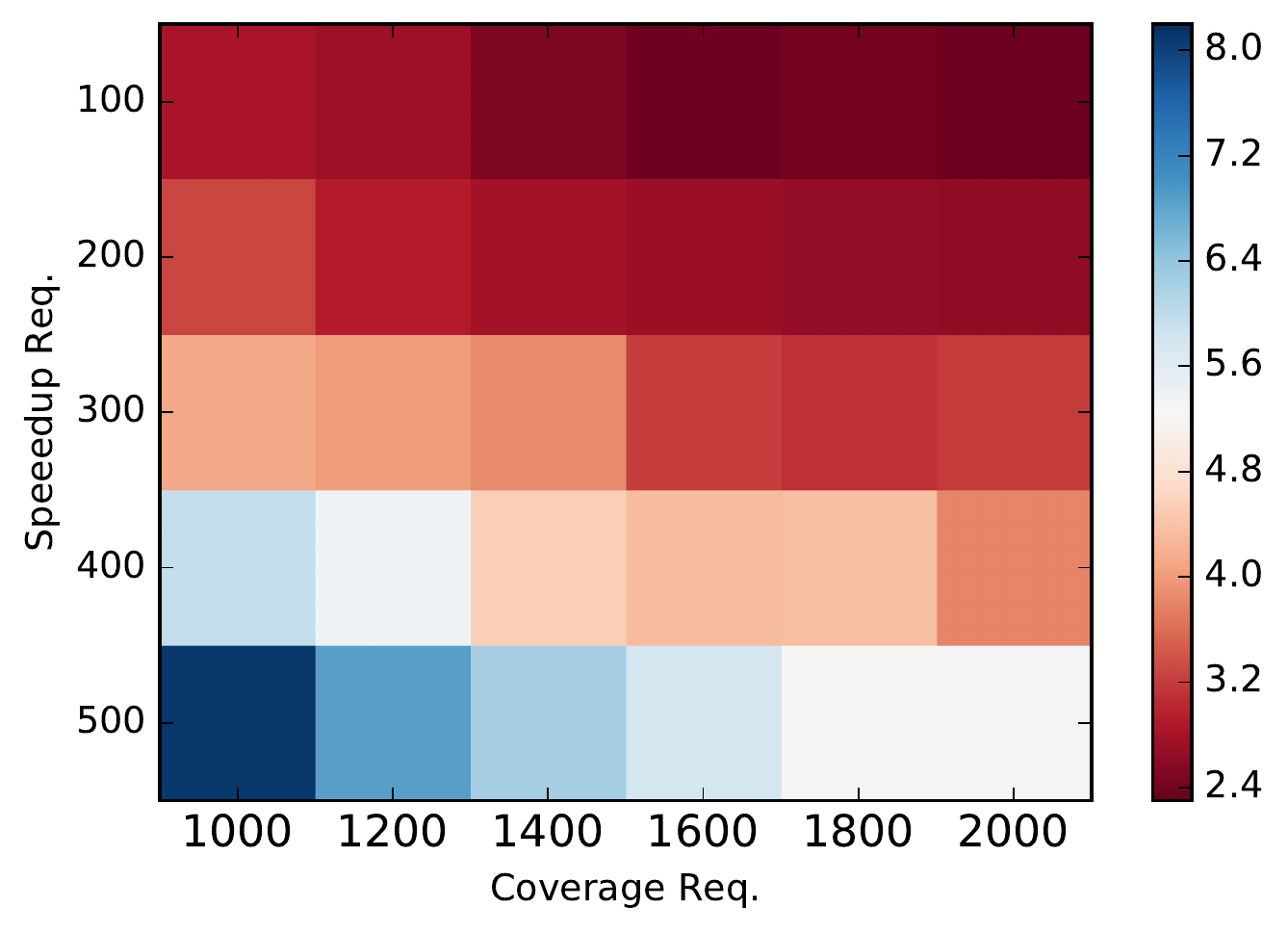}
        \label{fig:srcr1}
	}
    \subfloat[Medium, T-Node and A-Node]{
		\includegraphics[width=0.32\linewidth, height = 35mm]{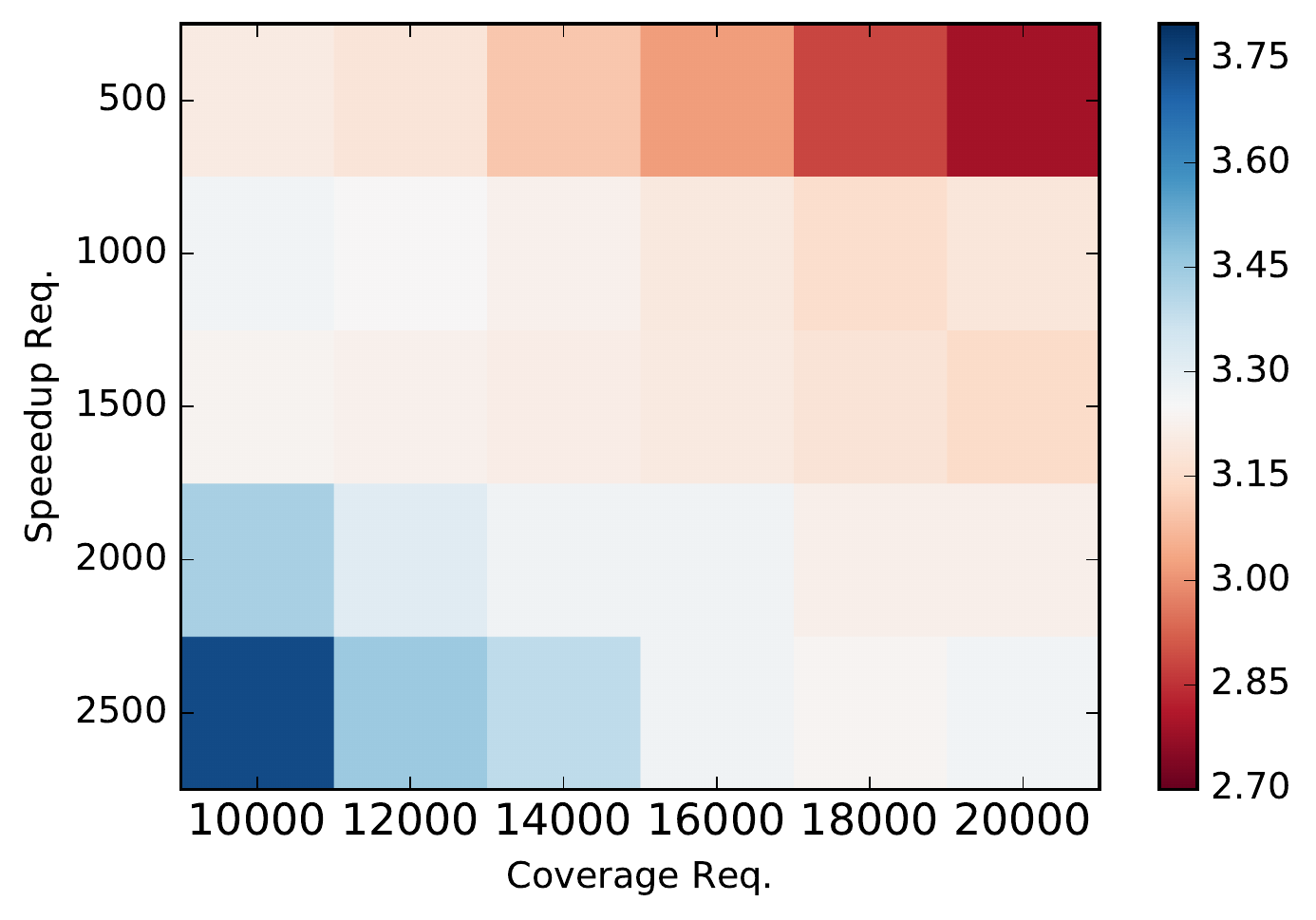}
        \label{fig:srcr2}
	}
    \subfloat[Large, T-Node and A-Node]{
		\includegraphics[width=0.32\linewidth, height = 35mm]{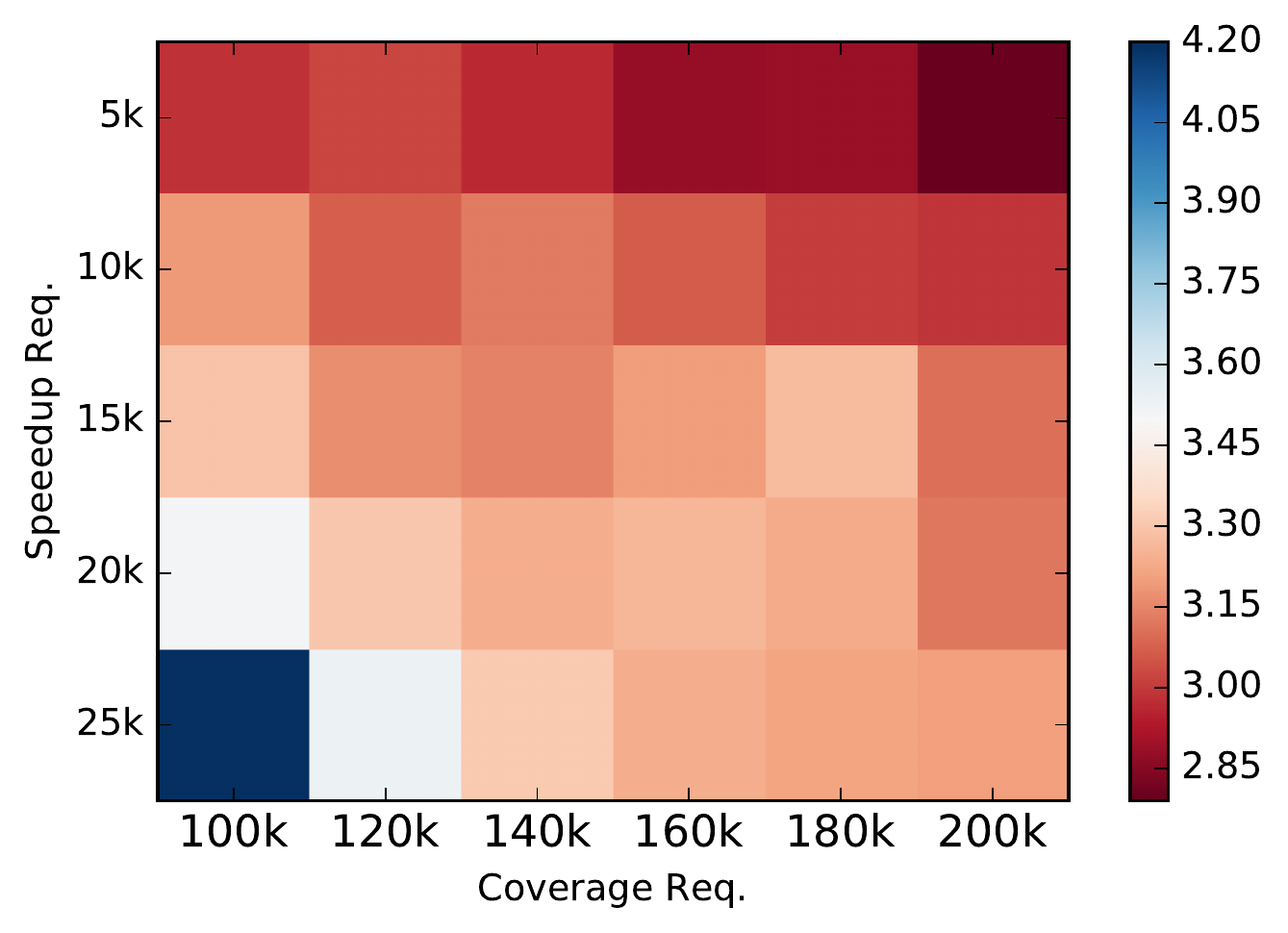}
        \label{fig:srcr3}
	}\\
	\subfloat[Small, $r$ and A-Node]{
		\includegraphics[width=0.32\linewidth, height = 35mm]{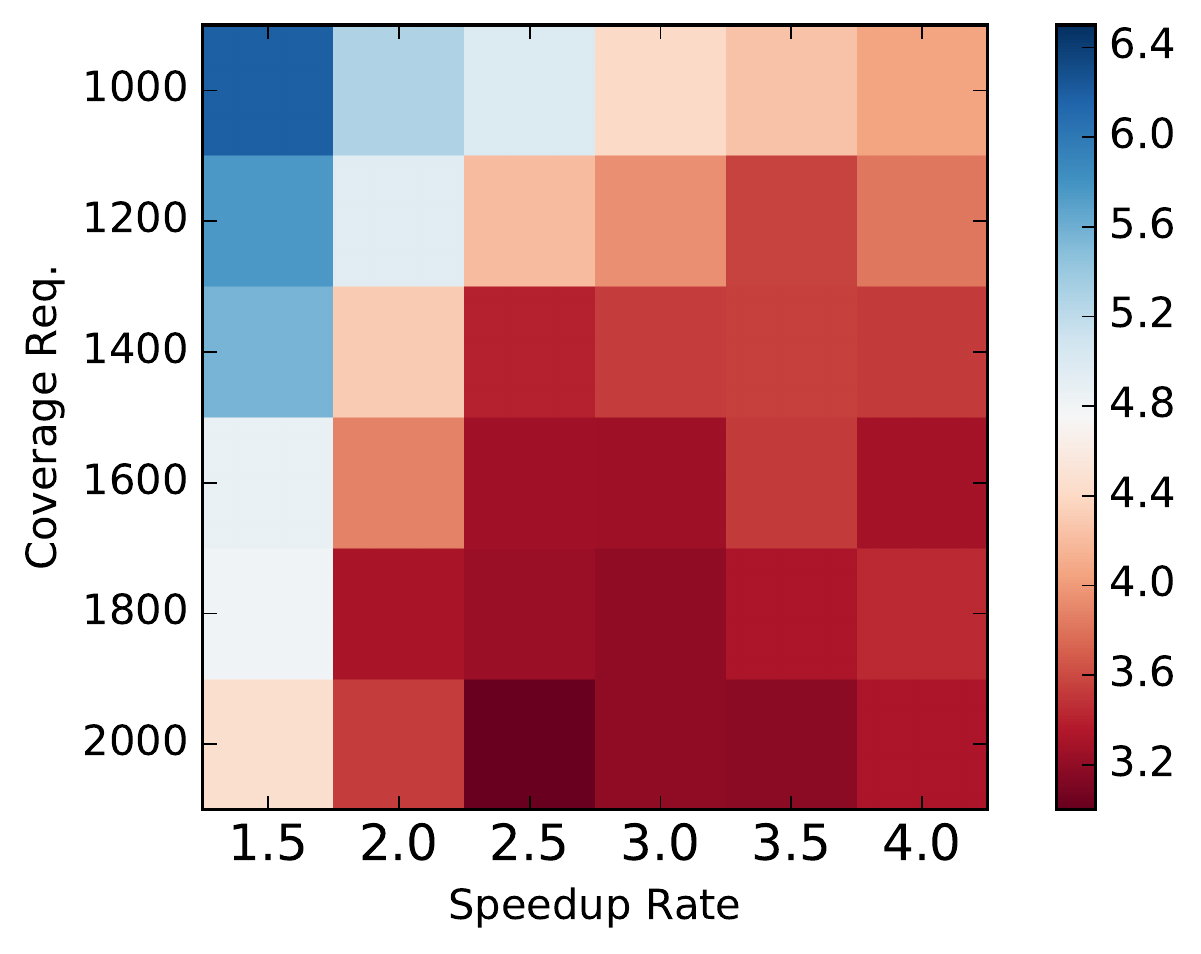}
        \label{fig:scr1}
	}
	\subfloat[Medium, $r$ and A-Node]{
		\includegraphics[width=0.32\linewidth, height = 35mm]{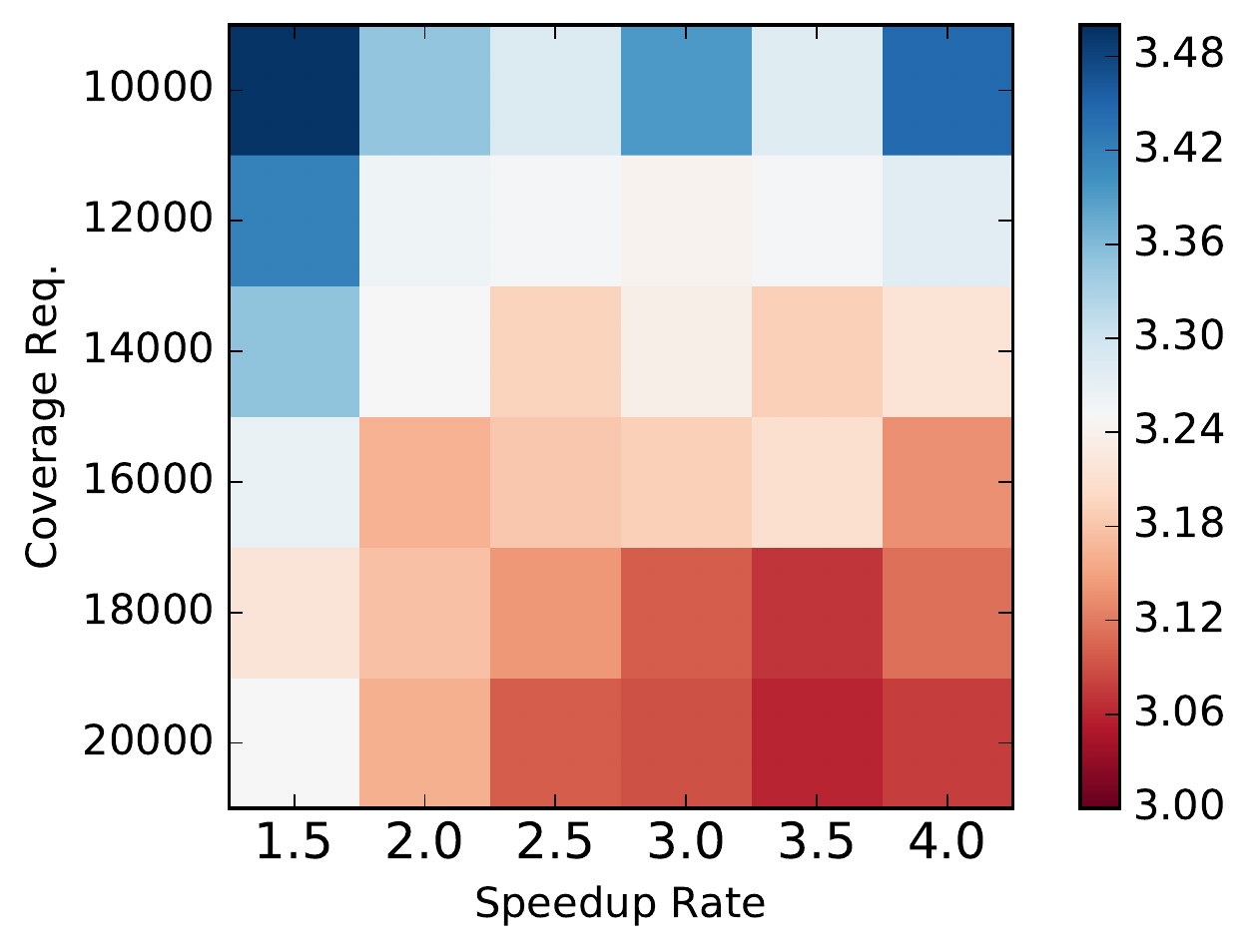}
        \label{fig:scr2}
	}
    \subfloat[Large, $r$ and A-Node]{
		\includegraphics[width=0.32\linewidth, height = 35mm]{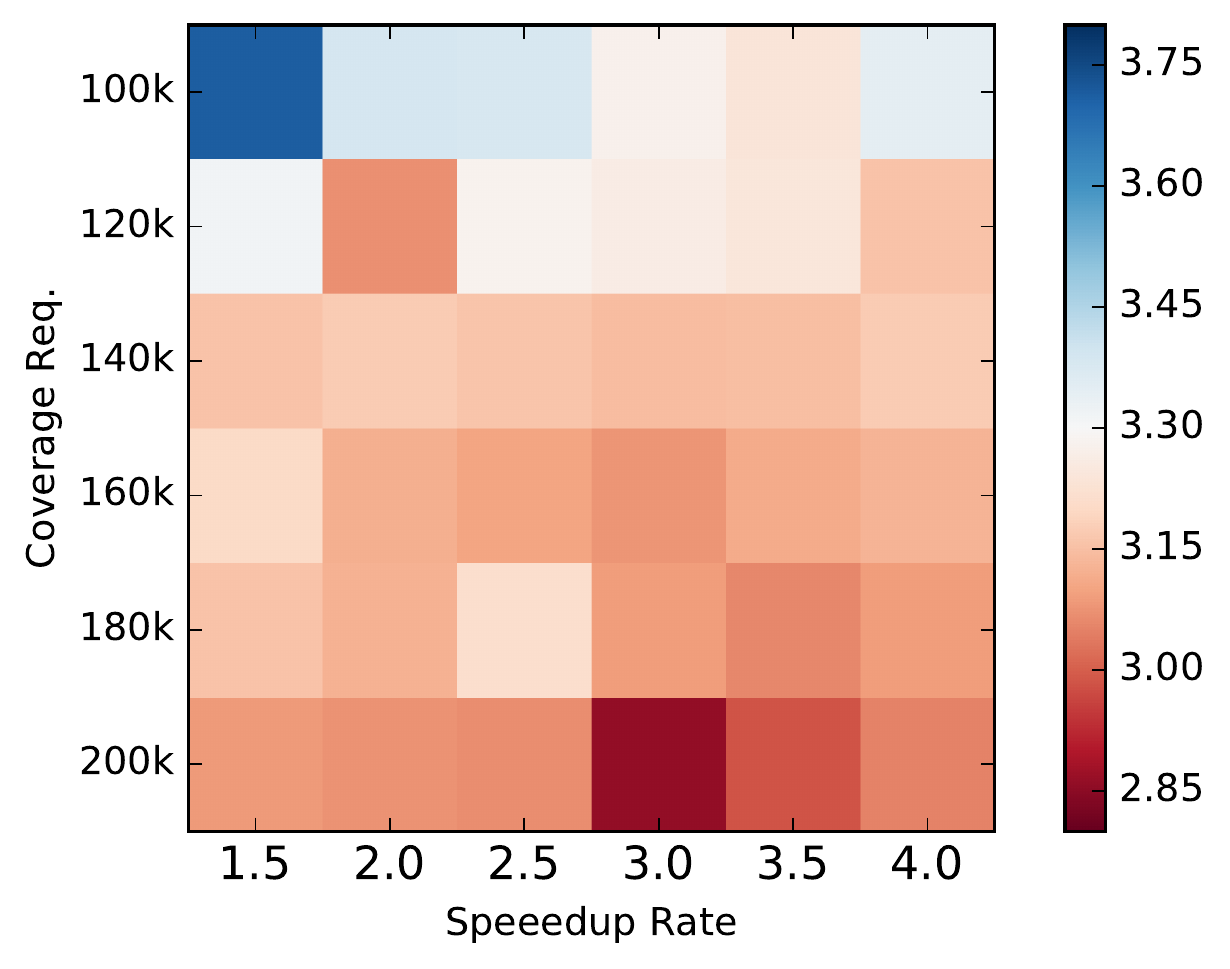}
        \label{fig:scr3}
	}
	\caption{Sensitivity Analysis Results}
	\label{fig:sensitivity}
\end{figure*}
In Fig.~\ref{fig:ssr1}-\ref{fig:ssr3}, we vary $r$ and T-Nodes. One clear trend is that when more T-Nodes are required, FAST tend to choose a later time for starting the trend and the speed up. It is reasonable as it can be costly to activate T-Nodes early when the number is high. As we observed in some scenarios, FAST may choose not to have the speed up at all when T-Node is too large and $r$ is small. What seems counter-intuitive in those Figures is that 
FAST may not choose to have the speed up earlier when $r$ increases. This phenomena can possibly explained from the perspective of cost. With a larger $r$, FAST can influence a set of nodes in less time, even with the same seed set. Thus, FAST is not that ``hurry" of starting the speed up, as it is able to reach the threshold when triggering the speed up later with less cost. 

In Figs.~\ref{fig:srcr1}-\ref{fig:scr3}, we vary A-Nodes. In most of the cases, FAST tend to have the speed up earlier when facing a larger threshold. The opposite is shown in some rare cases, mostly because $r$ is too low or T-Node is too high, which makes FAST think starting the speed-up early is not beneficial. 

\section{Conclusion and Future Works}\label{sc:conclusion}
In this paper, we proposed a novel dynamic influence propagation model, which can more accurately characterize the information diffusion in social networks compared with the existing propagation models. The model is supported by analysis of the crawled retweet data, that most topics will propagate faster after being trending. To study the impact of DIP in OSNs, we propose a new TAP-DIP problem by substituting the static propagation model in TAP with DIP. Although TAP-DIP is even harder than TAP, we designed the FAST algorithm that can solve it with approximation ratio similar to the best for TAP. In experiments, we demonstrated that FAST can generate high quality seed sets and is scalable. Also, we confirmed that DIP has a high impact in the result.

The considered TAP-DIP problem allows a single rate increase, yet the propagation rate of some topics may change multiple times, or even continuously. Also, TAP-DIP requires activating one threshold, while some variations of TAP may need to activate different thresholds for different target groups, especially for multiplex OSNs or OSNs with clear community structure \cite{nguyen2011overlapping}. Notice that if we can fix all the times for rate increase, the MMinSeed algorithm is still applicable, as it is general enough to handle multiple thresholds at the same time. However, the performance of the FAST algorithm may be degraded, since we need to locate not one, but multiple times for rate increase, and multi-variate Lipschitz Optimization can be time-consuming. To tackle this problem, we may allow heuristic algorithms that focus on one rate increase time per iteration and assume all other times are fixed. We leave the design of such heuristic algorithms, or even a completely new algorithm to solve TAP-DIP, in future work. 
\section*{Acknowledgement}
This work was supported by NSF CCF 1422116.
 \bibliographystyle{unsrt}

\appendix

\section{The Proofs}\label{sc:appendixproofs}
\begin{proof}[Proof of Lemma~\ref{lemma:ddipsinglebound}]\let\qed\relax
First, it is clear that $H'(t^*)>H'(t^{*'})$ and $H(t^*)<H'(t^*)$ by definition of optimal values of $H(.)$ and $H(.)'$. 

We also claim that $H(t^{*})\geq \max(S^*_s(t^{*}),S^*_a(t^{*}))$ as otherwise, at least one of $S^*_s(t^{*}), S^*_a(t^{*})$ is not minimum. Therefore, 
\begin{align*}
\frac{H'(t^{*'})}{H(t^*)}&\leq \frac{H'(t^{*})}{H(t^*)}\leq \frac{2 \max(S^*_s(t^*),S^*_a(t^*))}{H(t^*)}\\
&\leq \frac{2 \max(S^*_s(t^*),S^*_a(t^*))}{\max(S^*_s(t^*),S^*_a(t^*))} = 2\qquad\QED
\end{align*}
\end{proof}

\begin{proof}[Proof of Lemma \ref{lemma:locallipschitzconstant}]\let\qed\relax
Denote $t'<t''$ as two arbitrary points in $[t_1,t_2]$. As $S^*_s(t), S^*_a(t)$ are monotone decreasing/increasing respectively, we have
\begin{align*}
0&\leq S^*_s(t')-S^*_s(t'')\leq S^*_s(t_1)-S^*_s(t_2)\\
0&\leq S^*_a(t'')-S^*_a(t')\leq S^*_a(t_2)-S^*_a(t_1)
\end{align*}

Therefore, 
\begin{align*}
H'(t')-H'(t'') &= (S^*_s(t')-S^*_s(t''))-(S^*_a(t'')-S^*_a(t'))\\
			&\leq S^*_s(t')-S^*_s(t'') \leq S^*_s(t_1)-S^*_s(t_2)
\end{align*}

Similarly,
\begin{align*}
H'(t')-H'(t'') &\geq -(S^*_a(t'')-S^*_a(t'))\geq -(S^*_a(t_2)-S^*_a(t_1))
\end{align*}

Thus, 
\begin{align*}
|H'(t')-H'(t'')|&\leq \max \{S^*_s(t_1)-S^*_s(t_2),S^*_a(t_2)-S^*_a(t_1)\}\\
			&\leq l_{t_2}|t'-t''| \text{\quad (Since  $|t-t''|\geq \Delta$)}\quad\QED
\end{align*}
\end{proof}

\begin{proof}[Proof of Theorem \ref{theorem:lipschitzratio}]\let\qed\relax
By the time Alg. \ref{alg:lipschitz} stops, the global minimum $H'(t^*)$ of $H'(.)$ is lower bounded by 
\begin{align*}
R_i=\frac{H'(t_{i-1})+H'(t_i)}{2}-\frac{l_i(t_i-t_{i-1})}{2}
\end{align*}
as $R_i$ is the minimum lower bound of all intervals. Since $H'(\bar{t})$ is the minimum of all calculated function values, we have:
\begin{align*}
& H(\bar{t})-2H(t^*)  \quad \text{($H(.)$ is lower bounded by $H'(.)$)}\\
& \leq H'(\bar{t})-H'(t^{*'})\leq \min\{H'(t_{i-1}),H'(t_i)\}-R_i\\
				&= |\frac{H'(t_{i-1})-H'(t_i)}{2}|+\frac{l_i(t_i-t_{i-1})}{2}\\
                &\leq \frac{l_i(t_i-t_{i-1})}{2}+\frac{l_i(t_i-t_{i-1})}{2} \quad (\text{Lipschitz Continuity})\\
                &\leq 1 \quad(|t_i-t_{i-1}|\leq \frac{1}{l_i})\quad\QED
\end{align*}
\end{proof}

\begin{proof}[Proof of Theorem \ref{theorem:greedybicriteriaratio}]
Denote the optimal solution to MTAP as $S^*$. We consider the greedily selected collection $S_j^g$ with smallest $j$ that satisfies
\begin{align*}
f(S_j^g)\geq f(S^*) -c
\end{align*}
where $c$ is a constant. So, we want to solve the equation $f(S_j^g)= f(S^*) -c$. By \eqref{eqn:greedyguarantee}, we have
\begin{align*}
(1-(1-\frac{1}{|S^*|})^j-\epsilon)f(S^*)&\leq f(S^{*})-c\\
-(1-\frac{1}{|S^*|})^j&\leq -\frac{c-\epsilon f(S^*)}{f(S^*)}\\
-j \log (1-\frac{1}{|S^*|}) &\leq \log f(S^*)-\log (c-\epsilon f(S^*))\\
\frac{j}{|S^*|}&<\log f(S^*)-\log (c-\epsilon f(S^*))
\end{align*}
The last inequality follows as $\log (1-\frac{1}{|S^*|})<-\frac{1}{|S^*|}$ and $\frac{1}{|S^*|}\in (-1,0)$. By selecting $c=\epsilon f(S^*)$, we have $\frac{j}{|S^*|}<\log f(S^*)$ and  $f(S_j^g) >= (1-\epsilon) f(S^*)$.
\end{proof}

\begin{proof}[Proof of Lemma \ref{lemma:numberofsamples}]
We write $f(S) = \sum_{l=1}^L f^l(S)$ where $f^l(S)=\min\{\eta_l|V^l|, \mathbb{I}_T^l(S)\}$. Denote $\mathcal{R}^l$ as the collection of samples for estimating $\mathbb{I}_T^l(S)$ and 
\begin{align*}
\hat{\mathbb{I}}_T^l(S)=\frac{\sum_{R_i\in |\mathcal{R}^l|}x_{R_i}}{|\mathcal{R}^l|}|V^l|
\end{align*}
where $x_{R_i}=1$ when $R_i\cap S\neq \emptyset$ and 0 otherwise. Define 
\begin{align*}
\hat{f}(S)=\sum_{l=1}^L\hat{f}^l(S)=\min \{\eta_l|V^l|,\hat{\mathbb{I}}_T^l(S)\}
\end{align*} Denote the optimal size $k$ solution that maximizes $f(S)$ as $S_k^*$ and the greedy size $j$ solution as $S_j^g$.

Now, we bound the number of samples required to approximate $f(S_k^*)$ using $\hat{f}(S_j^g)$ with $(1-(1-1/k)^j-\epsilon)$ ratio and arrive at \eqref{eqn:greedyguarantee}.
 
We will need the following two lemmas in our proof.
\begin{lemma}\label{lemma:firstthreshold} \cite{Tang15}
	Let $\delta_1\in(0,1),\epsilon_1>0$, and 
	\begin{align}
		Q_1^l = \frac{2|V^l|\log(\frac{1}{\delta_1})}{\mathbb{I}^l_T(S_k^*)\epsilon_1^2} \label{eqn:firstthreshold}
	\end{align}
	If the total number of samples is at least $Q_1^l$, then $\hat{\mathbb{I}}^l_T(S_k^*)\geq (1-\epsilon_1)\mathbb{I}^l_T(S_k^*)$ holds with at least $1-\delta_1$ probability, for the optimal seed set $S^*$.  
\end{lemma}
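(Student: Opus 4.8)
The plan is to recognize $\hat{\mathbb{I}}_T^l(S)$ as a scaled sample mean of independent Bernoulli indicators and then apply a one-sided (lower-tail) multiplicative Chernoff bound, with the sample count $Q_1^l$ chosen exactly so that the resulting tail probability does not exceed $\delta_1$.

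First I would invoke the defining property of the RIS construction under the Continuous IC model: a random reverse-reachable sample $R$, rooted at a node drawn uniformly from $V^l$ and generated within the time horizon $T$, satisfies $\Pr[R\cap S\neq\emptyset]=\mathbb{I}_T^l(S)/|V^l|$. Consequently each indicator $x_{R_i}$ is an independent Bernoulli variable with mean $p=\mathbb{I}_T^l(S)/|V^l|$, and $\hat{\mathbb{I}}_T^l(S)=(|V^l|/|\mathcal{R}^l|)\sum_i x_{R_i}$ is an unbiased estimator of $\mathbb{I}_T^l(S)$. Fixing $S=S_k^*$ and writing $\theta=|\mathcal{R}^l|$ and $X=\sum_i x_{R_i}$, we obtain $\mathbb{E}[X]=\theta p=\theta\,\mathbb{I}_T^l(S_k^*)/|V^l|$.

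Next I would rewrite the target event. The desired inequality $\hat{\mathbb{I}}_T^l(S_k^*)\geq(1-\epsilon_1)\mathbb{I}_T^l(S_k^*)$ is equivalent to $X\geq(1-\epsilon_1)\mathbb{E}[X]$, so the failure event is precisely the lower tail $X<(1-\epsilon_1)\mathbb{E}[X]$. Since $X$ is a sum of independent $\{0,1\}$-valued random variables, the lower-tail multiplicative Chernoff bound applies and gives
\begin{align*}
\Pr\!\left[X<(1-\epsilon_1)\mathbb{E}[X]\right]\leq \exp\!\left(-\frac{\epsilon_1^2\,\mathbb{E}[X]}{2}\right).
\end{align*}
Finally I would substitute the sample count. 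Since $\theta\geq Q_1^l$, we have $\mathbb{E}[X]=\theta\,\mathbb{I}_T^l(S_k^*)/|V^l|\geq 2\log(1/\delta_1)/\epsilon_1^2$, whence $\exp(-\epsilon_1^2\mathbb{E}[X]/2)\leq\delta_1$, which establishes the claim.

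The only genuinely model-specific step, and hence the main obstacle, is the RIS identity $\Pr[R\cap S\neq\emptyset]=\mathbb{I}_T^l(S)/|V^l|$ for the time-bounded Continuous IC model; verifying it requires the reverse-reachable-set construction adapted from \cite{Tang15,nguyen2016targeted} to respect both the edge pdfs $l_{uv}(\tau)$ and the horizon $T$, so that a sample truly reports whether a uniformly chosen root in $V^l$ would be influenced by $S$ within time $T$. Once that identity and the independence of distinct samples are in hand, the remainder is the routine Chernoff computation above, and the specific form of $Q_1^l$ is dictated entirely by solving $\exp(-\epsilon_1^2\mathbb{E}[X]/2)=\delta_1$ for $\theta$.
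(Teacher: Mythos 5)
The paper never proves this lemma: it is imported verbatim from \cite{Tang15} (it is the analogue of IMM's lemma on estimating the optimum's coverage), so there is no in-paper argument to compare against. Your reconstruction is correct and is essentially the standard argument behind the cited result: the RIS identity $\Pr[R\cap S\neq\emptyset]=\mathbb{I}_T^l(S)/|V^l|$, the reduction of the failure event to the lower tail of $X=\sum_i x_{R_i}$, and the computation $\mathbb{E}[X]\geq 2\log(1/\delta_1)/\epsilon_1^2$, giving $\exp(-\epsilon_1^2\mathbb{E}[X]/2)\leq\delta_1$, is exactly how the threshold $Q_1^l$ in \eqref{eqn:firstthreshold} arises. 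One refinement worth flagging: \cite{Tang15} deliberately proves this via a martingale lower-tail inequality rather than the i.i.d.\ Chernoff bound, because in the algorithms that consume the lemma --- including Alg.~\ref{alg:multiim} here, where $N_{\mathcal{R}^l}$ doubles based on coverage checks against previously generated samples --- the number of samples is a stopping time correlated with the samples themselves, so at the random stopping count the indicators are not a fixed-length i.i.d.\ sequence. The martingale version (the paper quotes its upper-tail twin as Lemma~\ref{lemma:martingale}) has exactly the same exponent on the lower tail, so your computation and the constant in $Q_1^l$ are unchanged; but as written your proof covers only a deterministically fixed sample count, and to deploy the lemma inside Multi-IM you should either invoke the martingale form or argue that the relevant count can be fixed in advance. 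Your identification of the time-bounded continuous-time RIS identity as the one model-specific step is accurate; that step is precisely the content of the sampling constructions adapted from \cite{Tang15,nguyen2016targeted}.
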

\begin{lemma}\label{lemma:secondthreshold} \cite{Tang15}
	Let $\delta_2\in(0,1), \epsilon>\epsilon_1$ and 
	\begin{align}
		Q_2=\frac{2(1-(1-1/k)^j)|\mathcal{V}|(\log\binom{|\mathcal{V}|}{j}+\log\frac{1}{\delta_2}))}{f(S_k^*)(\epsilon-(1-(1-1/k)^j)\epsilon_1)^2}\label{eqn:secondthreshold}
	\end{align}
	For the set $S^g_j$, if 
	\begin{align}
		\hat{f}(S^g_j)\geq (1-(1-1/k)^j)(1-\epsilon_1) \hat{f}(S_k^*)\label{eqn:greedycondition}
	\end{align}
	and the total number of samples is at least $Q_2$, then $f(S^g_j)\geq (1-(1-1/k)^j-\epsilon) f(S_k^*)$ holds with at least $1-\delta_2$ probability.
\end{lemma}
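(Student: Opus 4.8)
The plan is to read Lemma~\ref{lemma:secondthreshold} as a one-sided (upper-tail) concentration bound made uniform over all candidate greedy outputs, and to prove it by a multiplicative Chernoff argument followed by a union bound. Write $\beta=1-(1-1/k)^j$. The starting observation is that, for a \emph{fixed} set $S$, the RIS estimator $\hat{\mathbb{I}}^l_T(S)=\frac{|V^l|}{|\mathcal{R}^l|}\sum_{R_i\in\mathcal{R}^l}x_{R_i}$ is a scaled average of i.i.d.\ Bernoulli indicators with mean $\mathbb{I}^l_T(S)/|V^l|$, hence unbiased for $\mathbb{I}^l_T(S)$; and since $t\mapsto\min\{\eta_l|V^l|,t\}$ is concave, nondecreasing and $1$-Lipschitz, Jensen's inequality gives $\mathbb{E}[\hat f(S)]\le f(S)$, while any upper-tail bound on the raw estimators transfers to the clipped sum $\hat f(S)=\sum_l\hat f^l(S)$. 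Thus $\hat f(S)$ cannot overshoot $f(S)$ in expectation, and the whole task is to control how far it can deviate upward.

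First I would isolate the \emph{bad} sets: those $S$ with $|S|\le j$ and $f(S)<(\beta-\epsilon)f(S_k^*)$. For a fixed bad set, $\mathbb{E}[\hat f(S)]\le f(S)<(\beta-\epsilon)f(S_k^*)$, whereas the event I must rule out pushes $\hat f(S)$ up to the threshold $\beta(1-\epsilon_1)f(S_k^*)$. The separation between threshold and mean is at least $\big(\beta(1-\epsilon_1)-(\beta-\epsilon)\big)f(S_k^*)=(\epsilon-\beta\epsilon_1)f(S_k^*)$, which is exactly the gap that appears squared in the denominator of $Q_2$ in \eqref{eqn:secondthreshold}. A multiplicative Chernoff bound on the scaled Bernoulli sum then bounds this single-set failure probability by roughly $\exp\!\big(-\tfrac{(\epsilon-\beta\epsilon_1)^2 f(S_k^*)}{2\beta|V|}\,N\big)$, where $N$ is the number of samples. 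Normalizing the deviation by $\mathrm{OPT}=f(S_k^*)$ rather than by $f(S)$ is the essential modeling choice: it yields a bound valid uniformly across sets of very different true value, including ones too small ever to admit a multiplicative estimate.

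Next I would take a union bound over all $\binom{|V|}{j}$ sets of size at most $j$, which is unavoidable because the greedy output $S^g_j$ is itself a function of the samples and so cannot be treated as a fixed set. Requiring the per-set failure probability to be at most $\delta_2/\binom{|V|}{j}$ and solving $\tfrac{(\epsilon-\beta\epsilon_1)^2 f(S_k^*)}{2\beta|V|}\,N\ge\log\binom{|V|}{j}+\log\tfrac{1}{\delta_2}$ for $N$ reproduces $Q_2$ exactly; here $\log\binom{|V|}{j}$ is precisely the price of the union bound and $\log\tfrac{1}{\delta_2}$ is the confidence term.

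Finally I would close the argument on the good event, which holds with probability at least $1-\delta_2$ and on which no bad set reaches the threshold $\beta(1-\epsilon_1)f(S_k^*)$. The hypothesis \eqref{eqn:greedycondition} together with Lemma~\ref{lemma:firstthreshold} (which certifies $\hat f(S_k^*)\ge(1-\epsilon_1)f(S_k^*)$) guarantees that the empirical value $\hat f(S^g_j)$ does clear this threshold; hence $S^g_j$ cannot be bad, i.e.\ $f(S^g_j)\ge(\beta-\epsilon)f(S_k^*)$, as claimed, and the remaining juggling of the $\epsilon_1$-constants is routine. I expect the main obstacle to be exactly the data-dependence of $S^g_j$: it is what forces the union bound and the $\log\binom{|V|}{j}$ factor, and it dictates that deviations be measured in units of $\mathrm{OPT}$ so that a single sample budget suffices for every candidate set simultaneously. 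The clipping at $\eta_l|V^l|$ and the summation over the $L$ ground sets add only bookkeeping, since clipping can only shrink upward deviations and the per-ground-set bounds compose additively.
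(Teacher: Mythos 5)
Your proposal is correct in its overall architecture, but note first that the paper does not actually prove this lemma: it is imported from \cite{Tang15} (Lemma 3 of the IMM paper), accompanied only by the one-sentence remark that the original statement (for $j=k$ and for $\mathbb{I}_T$ rather than the clipped $f$) ``can be easily extended.'' What you have written reconstructs precisely the argument of the cited source: define bad sets relative to $f(S_k^*)$, bound the upper tail of the estimator for each fixed bad set with deviations measured in units of $f(S_k^*)$ (this normalization is indeed the crux, since bad sets may have arbitrarily small true value and admit no multiplicative estimate), pay a $\log\binom{|V|}{j}$ union-bound price for the data-dependence of $S_j^g$, and conclude that on the good event the greedy output cannot be bad because its empirical value clears the bad-set threshold; solving the resulting inequality for the sample size reproduces \eqref{eqn:secondthreshold}. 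Moreover, your observation that the clip $t\mapsto\min\{\eta_l|V^l|,t\}$ is monotone, concave and $1$-Lipschitz---so that $\mathbb{E}[\hat f(S)]\le f(S)$ and upper-tail bounds on $\hat{\mathbb{I}}_T^l$ transfer to $\hat f^l$---is exactly the justification the paper omits for its ``easily extended'' claim, so on that point you supply more than the paper does.

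The one place where your ``routine juggling of the $\epsilon_1$-constants'' hides a real discrepancy is the closing step. Write $\beta=1-(1-1/k)^j$. The hypothesis \eqref{eqn:greedycondition} as literally stated has the \emph{empirical} $\hat f(S_k^*)$ on its right-hand side, so combining it with Lemma \ref{lemma:firstthreshold} yields only $\hat f(S_j^g)\ge\beta(1-\epsilon_1)^2 f(S_k^*)$, not the threshold $\beta(1-\epsilon_1)f(S_k^*)$ that your separation $(\epsilon-\beta\epsilon_1)f(S_k^*)$ presumes; under that literal reading the denominator of $Q_2$ would have to be $(\epsilon-\beta(2-\epsilon_1)\epsilon_1)^2$. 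The repair is to invoke what the greedy step actually guarantees, namely $\hat f(S_j^g)\ge\beta\,\hat f(S_k^*)$ with no $(1-\epsilon_1)$ factor (the true optimizer $S_k^*$ is one feasible size-$k$ candidate for the empirical maximization); chaining this with Lemma \ref{lemma:firstthreshold} gives exactly $\hat f(S_j^g)\ge\beta(1-\epsilon_1)f(S_k^*)$ and recovers \eqref{eqn:secondthreshold}, which is also how \cite{Tang15} phrases the condition (in terms of the true optimum, with the greedy ratio kept separate from the sampling error). This mismatch is inherited from the paper's hybrid restatement of the lemma rather than from your approach, but a complete proof should state which reading it adopts instead of deferring the constants.
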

The original Lemma \ref{lemma:secondthreshold} is only for the case when $j=k$ and for function $\mathbb{I}_T(S)$ instead of $f(S)$, yet it can be easily extended to this more generalized version. Also, since we consider situations with $j>k$ and the first derivative of $Q_2(k)$ is positive, we can increase $k$ such that $(1-1/k)^j = 1/e$ and achieve a upper bound $\bar{Q}_2$ on $Q_2$ with single variable $j$.
	\begin{align}
		\bar{Q}_2=\frac{2(1-1/e)|\mathcal{V}|(\log\binom{|\mathcal{V}|}{j}+\log\frac{1}{\delta_2}))}{f(S_k^*)(\epsilon-(1-1/e)\epsilon_1)^2}\label{eqn:secondthresholdupperbound}
	\end{align}
We now claim that $\hat{f}^l(S_k^*) \geq (1-\epsilon_1)f^l(S_k^*)$ holds with probability $(1-\delta_1)$ with at least $Q_1^l$ samples. 

To prove the claim, we consider four situations w.r.t. $\mathbb{I}^l_T(S),\hat{\mathbb{I}}^l_T(S)$ and $\eta^l$ for each $l$. 

\textbf{Situation 1:} $\hat{\mathbb{I}}_T^l(S_k^*)\leq \eta^l|V^l|, \mathbb{I}_T^l(S_k^*)\leq \eta^l|V^l|$. In this situation, $\hat{f}^l(S_k^*)=\hat{\mathbb{I}}_T^l(S_k^*)$ and $f^l(S_k^*)=\mathbb{I}_T^l(S_k^*)$ and the claim holds naturally due to Lemma \ref{lemma:firstthreshold}.

\textbf{Situation 2:} $\hat{\mathbb{I}}_T^l(S_k^*)\leq \eta^l|V^l|, \mathbb{I}_T^l(S_k^*)> \eta^l|V^l|$. In this situation, $\hat{f}^l(S_k^*)=\hat{\mathbb{I}}_T^l(S_k^*)$ and $f^l(S_k^*)=\eta^l |V^l|<\mathbb{I}_T^l(S_k^*)$. By Lemma \ref{lemma:firstthreshold},
\begin{align*}
\hat{f}^l(S_k^*)&=\hat{\mathbb{I}}_T^l(S_k^*)\geq (1-\epsilon_1)\mathbb{I}_T^l(S_k^*)\\
&\geq (1-\epsilon)\eta^l|V^l| =(1-\epsilon_1)f^l(S_k^*)
\end{align*}

\textbf{Situation 3:} $\hat{\mathbb{I}}_T^l(S_k^*)> \eta^l|V^l|, \mathbb{I}_T^l(S_k^*)\leq \eta^l|V^l|$. In this situation, $\hat{f}^l(S_k^*)=\eta^l|V^l|$ and $f^l(S_k^*)=\mathbb{I}_T^l(S_k^*)\leq \eta^l|V^l|$. So, $\hat{f}^l(S_k^*)>f^l(S_k^*)>(1-\epsilon_1)f^l(S_k^*)$.

\textbf{Situation 4:} $\hat{\mathbb{I}}_T^l(S_k^*)> \eta^l|V^l|, \mathbb{I}_T^l(S_k^*)> \eta^l|V^l|$. In this situation, $\hat{f}^l(S_k^*)=\eta^l|V^l|=f^l(S_k^*)$, so $\hat{f}^l(S_k^*)=f^l(S_k^*)>(1-\epsilon_1)f^l(S_k^*)$.

Combining the results from all four possible situations, we have proved the claim. 

With both Lemma \ref{lemma:secondthreshold} and the claim, the minimum number of samples to guarantee \eqref{eqn:greedyguarantee} is $\max\{\sum_{l=1}^L Q_1^l,\bar{Q}_2\}$. We write 
\begin{align*}
&\bar{Q}_2\leq\sum_{l=1}^L\bar{Q}_2^l=\frac{2(1-1/e)|V^l|(\log\binom{|\mathcal{V}|}{j}+\log\frac{1}{\delta_2}))}{\mathbb{I}^l_T(S_k^*)(\epsilon-(1-1/e)\epsilon_1)^2}\\
&\max\{\sum_{l=1}^L Q_1^l,\bar{Q}_2\}\leq \sum_{l=1}^L\max\{Q_1^l,\bar{Q}_2^l\}=Q_{min}
\end{align*}
By \cite{Tang15}, set
$\delta_1=\delta_2=\frac{\delta}{3L}, \epsilon_1=\epsilon\frac{\sigma}{(1-1/e)\sigma + \tau}$ where
$\sigma =\sqrt{\ln(\frac{3L}{\delta})}, \tau = \sqrt{(1-\frac{1}{e})(\ln\binom{|\mathcal{V}|}{j}+\ln\frac{3L}{\delta})}
$
bring the number of samples to 
\begin{align*}
Q =\sum_{l=1}^L\frac{2|V^l|\phi^2}{\mathbb{I}_T^l(S_k^*)},  \quad\phi = \frac{(1-1/e)\sigma+\tau}{\epsilon}
\end{align*}
which is larger than $Q_{min}$ by at most a small constant. The probability of ensuring \eqref{eqn:greedyguarantee} is at least $1-\frac{L+1}{3L}\delta$ by union bound.
\end{proof}

\begin{proof}[Proof of Lemma \ref{lemma:numberofsampleguarantee}]\let\qed\relax
We introduce Lemma \ref{lemma:martingale} for this proof.
\begin{lemma}\label{lemma:martingale}\cite{Tang15}
For any $\epsilon>0$, 
\begin{align}
Pr[\sum_{i=1}^{T}x_i-T \mu_X\geq \epsilon T \mu_X]\leq e^{-\frac{\epsilon^2}{2+\frac{2}{3}\epsilon}T \mu_X}
\end{align}
\end{lemma}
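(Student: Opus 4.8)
The plan is to prove this as a one-sided, upper-tail Bernstein/Chernoff concentration bound for a sum of $[0,1]$-valued (indicator) random variables, via the exponential moment method. In the RIS context each $x_i \in \{0,1\}$ is the indicator that the $i$-th reverse sample intersects the seed set, so $x_i$ is Bernoulli with (conditional) mean $\mu_X = E[x_i]$, and the partial sums $Z_j = \sum_{i=1}^j (x_i - \mu_X)$ form a martingale with respect to the natural filtration $\mathcal{F}_j$ generated by the first $j$ samples. First I would fix $\theta > 0$ and apply the exponential Markov inequality, writing $\Pr[\sum_{i=1}^{T}x_i - T\mu_X \geq \epsilon T\mu_X] = \Pr[e^{\theta Z_T} \geq e^{\theta \epsilon T\mu_X}] \leq e^{-\theta \epsilon T\mu_X}\, E[e^{\theta Z_T}]$.

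The core step is to control the moment generating function $E[e^{\theta Z_T}]$. For a single Bernoulli increment I would use the elementary conditional bound $E[e^{\theta(x_i-\mu_X)}\mid\mathcal{F}_{i-1}] = e^{-\theta\mu_X}(1-\mu_X+\mu_X e^\theta) \leq \exp(\mu_X(e^\theta-1-\theta))$. To accumulate these factors across $i$ while respecting the adaptive structure of the sampling, I would show that $W_j = \exp(\theta Z_j - j\mu_X(e^\theta-1-\theta))$ is a supermartingale and invoke the tower property (or optional stopping, to accommodate the random number of samples drawn by Alg.~\ref{alg:multiim}), which yields $E[e^{\theta Z_T}] \leq \exp(T\mu_X(e^\theta-1-\theta))$.

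Combining the two displays gives $\Pr[\cdot] \leq \exp(T\mu_X(e^\theta-1-\theta-\theta\epsilon))$, and I would eliminate the free parameter by choosing $\theta = \ln(1+\epsilon)$, which produces the classical Chernoff form $\left(\frac{e^\epsilon}{(1+\epsilon)^{1+\epsilon}}\right)^{T\mu_X} = \exp(-T\mu_X[(1+\epsilon)\ln(1+\epsilon)-\epsilon])$. The last step converts this into the stated exponent through the elementary inequality $(1+\epsilon)\ln(1+\epsilon)-\epsilon \geq \frac{\epsilon^2}{2+\frac{2}{3}\epsilon}$, which holds for all $\epsilon > 0$; I would justify it by noting that both sides vanish at $\epsilon = 0$ and agree through order $\epsilon^3$ in their Taylor expansions, then verifying the derivative inequality on $(0,\infty)$.

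The main obstacle is the martingale bookkeeping in the MGF step. Unlike the plain i.i.d. Chernoff argument, the number of samples $T$ here is set adaptively by the doubling-and-verification loop of Alg.~\ref{alg:multiim}, so I cannot factor $E[e^{\theta Z_T}]$ as a product of independent terms. Establishing the supermartingale property of $W_j$ and applying optional stopping (or, as a cleaner alternative, taking a fixed-$T$ bound and a union bound over the dyadic sample counts the algorithm can reach) is the delicate part; once the conditional MGF bound is secured, the optimization over $\theta$ and the closing elementary inequality are routine.
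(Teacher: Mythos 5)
Your argument is correct, and it is essentially \emph{the} proof of this bound: the exponential moment method with the conditional MGF estimate $E[e^{\theta(x_i-\mu_X)}\mid\mathcal{F}_{i-1}]\leq \exp(\mu_X(e^\theta-1-\theta))$, accumulation via the supermartingale/tower property, the choice $\theta=\ln(1+\epsilon)$, and the classical inequality $(1+\epsilon)\ln(1+\epsilon)-\epsilon\geq \frac{\epsilon^2}{2+\frac{2}{3}\epsilon}$ is precisely the standard martingale Chernoff--Bernstein derivation. The paper itself offers no proof --- it imports the lemma verbatim from \cite{Tang15}, where this same derivation appears --- so your reconstruction matches the cited source; the only simplification available is that the lemma as stated fixes $T$, so the tower property alone suffices and your optional-stopping caution is needed only for the algorithm's adaptive stopping, which the paper handles separately in Lemma~\ref{lemma:numberofsampleguarantee}.
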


Here we consider Bernoulli random variables $X_{S_k}^i$ with mean $\mu_{X_{S_k}}=\frac{\mathbb{I}^l_T(S_k)}{|V^l|}$. 
	\begin{align*}\scriptsize
		&\Pr[|\mathcal{R}^l|\leq Q^l] \leq \Pr[\sum_{i=1}^{|\mathcal{R}^l|}X_{S_k}^i\leq \sum_{i=1}^{Q^l}X_{S_k}^i]\\
        &=\Pr[C_{\mathcal{R}^l}(S_k)\leq \sum_{i=1}^{Q_l}X_{S_k}^i]\leq Pr[\gamma^l\leq \sum_{i=1}^{Q_l}X_{S_k}^i]\\
		&\leq Pr[\frac{(1+\log \frac{3L^2}{(2L-1)\delta}\frac{1}{\phi^2})Q^l\times \mathbb{I}(S_k)}{|V^l|}\times \frac{\mathbb{I}(S_k^*)}{\mathbb{I}(S_k)}\leq \sum_{i=1}^{Q_l}X_{S_k}^i]\\
        &\leq  Pr[\sum_{i=1}^{Q^l}X_{S_k}^i-Q^l\mu_{X_{S_k}}\geq \log \frac{3L^2}{(2L-1)\delta}\frac{1}{\phi^2}Q^l\mu_{X_{S_k}}\frac{\mathbb{I}(S^*_k)}{\mathbb{I}(S_k)}] \\
        &\leq \exp(-\frac{(\log \frac{3L^2}{(2L-1)\delta}\frac{1}{\phi^2})^2(\frac{\mathbb{I}(S^*_k)}{\mathbb{I}(S_k)})^2}{2+\frac{2}{3}(\log \frac{3L^2}{(2L-1)\delta}\frac{1}{\phi^2})\frac{\mathbb{I}(S^*_k)}{\mathbb{I}(S_k)}}Q^l \mu_{X_{S_k}}) \quad\text{(Lemma \ref{lemma:martingale})}\\
        &\leq \exp(-\frac{(\log \frac{3L^2}{(2L-1)\delta}\frac{1}{\phi^2})^2\mathbb{I}(S^*_k)}{2\log \frac{3L^2}{(2L-1)\delta}\frac{1}{\phi^2}|V^l|}Q^l)\\
        &\leq \exp(-\frac{\log \frac{3L^2}{(2L-1)\delta}\frac{1}{\phi^2}\mathbb{I}(S^*_k)}{2|V^l|}Q^l)\leq \frac{2L-1}{3L^2}\delta\qquad\qquad\QED
\end{align*}
\end{proof}

\end{document}